%% file: ARXIV_Cleaned.tex
\documentclass[lettersize,journal]{IEEEtran}

\usepackage{amsmath,amssymb,amsfonts,amsthm}

\usepackage{algorithm}
\usepackage{algorithmic}

\usepackage{graphicx}
\usepackage{array}
\usepackage{booktabs}
\usepackage[caption=false,font=normalsize,labelfont=sf,textfont=sf]{subfig}
\usepackage{stfloats}

\usepackage{textcomp}
\usepackage{xcolor}
\usepackage{url}
\usepackage{verbatim}
\usepackage{hyperref}
\usepackage{cite}

\theoremstyle{plain}
\newtheorem{thm}{Theorem}

\theoremstyle{definition}

\newcommand{\regret}{\text{Reg}}
\newcommand{\EE}[1]{{\mathbb{E}}\left[{#1}\right]}

\newcommand{\cov}[1]{{\mathrm{cov}}\left({#1}\right)}

\input{mathcommand.tex}

\begin{document}

\title{A Queueing-Theoretic Framework for Dynamic Attack Surfaces: Data-Integrated Risk Analysis and Adaptive Defense}

\author{
Jihyeon Yun$^{*}$,
Abdullah Yasin Etcibasi$^{*}$,
Ming Shi,
C. Emre Koksal%
\thanks{$^{*}$Jihyeon Yun and Abdullah Yasin Etcibasi contributed equally to this work.}%
\thanks{Jihyeon Yun is with Korea University, South Korea.}%
\thanks{Abdullah Yasin Etcibasi and C. Emre Koksal are with The Ohio State University, USA.}%
\thanks{Ming Shi is with University at Buffalo (SUNY), USA.}
}

\maketitle

\begin{abstract}
We develop a queueing-theoretic framework to model the temporal evolution of cyber-attack surfaces, where the number of active vulnerabilities is represented as the backlog of a queue. Vulnerabilities arrive as they are discovered or created, and leave the system when they are patched or successfully exploited. Building on this model, we study how automation affects attack and defense dynamics by introducing an AI amplification factor that scales arrival, exploit, and patching rates. Our analysis shows that even symmetric automation can increase the rate of successful exploits. We validate the model using vulnerability data collected from an open source software supply chain and show that it closely matches real-world attack surface dynamics. Empirical results reveal heavy-tailed patching times, which we prove induce long-range dependence in vulnerability backlog and help explain persistent cyber risk. Utilizing our queueing abstraction for the attack surface, we develop a systematic approach for cyber risk mitigation. We formulate the dynamic defense problem as a constrained Markov decision process with resource-budget and switching-cost constraints, and develop a reinforcement learning (RL) algorithm that achieves provably near-optimal regret. Numerical experiments validate the approach and demonstrate that our adaptive RL-based defense policies significantly reduce successful exploits and mitigate heavy-tail queue events. Using trace-driven experiments on the ARVO dataset, we show that the proposed RL-based defense policy reduces the average number of active vulnerabilities in a software supply chain by over 90\% compared to existing defense practices, without increasing the overall maintenance budget. Our results allow defenders to quantify cumulative exposure risk under long-range dependent attack dynamics and to design adaptive defense strategies with provable efficiency.
\end{abstract}

\begin{IEEEkeywords}
computer security, vulnerability dynamics, queueing theory, reinforcement learning, long-range dependence
\end{IEEEkeywords}

\section{Introduction}\label{sec:intro}
Cyber risk exhibits temporal dependence and cannot be adequately described by static or stationary reliability models.
Much of the existing approaches in cybersecurity focus on isolated attack models or mitigation mechanisms, offering limited understanding of the holistic and time-varying nature of vulnerabilities that define an organization's \emph{attack surface}. 
Modern infrastructures, spanning cloud services, software-defined networks, Internet of Things and distributed APIs, further amplify these dynamics, producing attack surfaces whose scale and evolution are often unknown even to their operators.

To address this gap, we develop a \emph{dynamic stochastic model} for the evolution of the attack surface. The model generalizes from an individual software component to an entire organization, and ultimately to large-scale ecosystems such as industry sectors or nation-state infrastructures. 
We formalize the instantaneous size of the attack surface as the number of active vulnerabilities, represented by the queue length of a stochastic service process. Arrivals to the queue correspond to the discovery or creation of new vulnerabilities, while departures represent either (a) successful exploitation or (b) successful patching. 
This queueing abstraction makes explicit the role of limited defense capacity, allowing attack-surface management to be studied as a resource-allocation and backlog-control problem.

Building on this foundation, we extend the model to capture the growing influence of automation and AI in both offensive and defensive operations. 
We introduce an \emph{AI amplification factor} that scales vulnerability arrival, exploit, and patching rates. This abstraction is not intended to model AI systems in detail, but to examine how this factor reshapes backlog dynamics.
Our analysis shows that even when attack and defense capabilities scale symmetrically, the rate of successful exploits can still increase superlinearly.

To demonstrate how accurately our proposed framework captures real-world vulnerability dynamics, we apply it to the problem of strengthening open-source software supply chains. Using the ARVO (Atlas of Reproducible Vulnerabilities for Open Source Software) dataset~\cite{mei2024arvo}, which contains over 4{,}000 reproducible vulnerabilities from Google’s OSS-Fuzz platform, we characterize the real-world dynamics of vulnerability discovery and patching across thousands of open-source projects. Event-level analysis reveals that vulnerability arrivals and lifetimes are bursty, heavy-tailed, and non-stationary, and that segmented queueing models accurately reproduce the temporal evolution of the attack surface size across development cycles. This temporal structure further exhibits \emph{long-range dependence (LRD)}, indicating that correlations in exposure decay polynomially rather than exponentially. In practical terms, the effects of individual vulnerabilities persist far beyond their initial disclosure, highlighting systemic bottlenecks in patch deployment and motivating the need for continuous, adaptive, and resource-aware defense strategies to ensure supply-chain resilience.

Motivated by persistent patching delays and the imbalance between vulnerability arrivals and limited defense capacity observed in both data and AI-driven analysis, we develop a reinforcement learning (RL) approach for adaptive defense under resource-budget constraints.
The defense resource, represented by the patching rate, directly influences the service process in our queueing model. Our dynamic framework allows defense rates to vary over time, while explicitly incorporating such switching costs into performance evaluation. Although the resulting control problem is analytically intractable in closed form, we design a low-complexity RL algorithm for adaptive defense allocation under uncertainty. In addition, we rigorously establish a near-optimal regret bound relative to an oracle defender and introduce new switching-reduction techniques that extend the theory of constrained Markov decision processes (CMDPs).

Finally, to illustrate the practical implications of our theoretical and empirical findings, 
we conduct numerical experiments to evaluate the proposed RL-based defense policy. The results show that adaptive resource allocation guided by our RL algorithm can substantially mitigate exploit success rates, achieving reductions of up to 55\% compared to static defense strategies, while maintaining stable performance under both stochastic and adversarial vulnerability arrivals. 
In trace-driven experiments using the ARVO dataset, our adaptive defense policy reduces the average number of active vulnerabilities in a software supply chain by more than 90\% compared to existing defense practices, while operating under the same overall maintenance budget.
These findings underscore that dynamic, learning-based defense policies not only outperform static benchmarks, but also yield smoother and more predictable system behavior. This demonstrates how our analytical framework can directly support real-world cyber-defense decision making.

The main contributions of this paper can be summarized as follows:

\begin{itemize}
    \item \textbf{Dynamic Queueing Model of the Attack Surface:}
    We develop a queueing-theoretic model that captures the temporal evolution of vulnerabilities, where arrivals represent vulnerability discovery and departures represent patching or exploitation.

    \item \textbf{AI-Amplified Threat Dynamics:}
    We introduce an AI amplification factor that scales both attack and defense processes, and show that even symmetric scaling can increase the rate of successful exploits.

    \item \textbf{Empirical Validation and LRD of Vulnerability Dynamics:}
    Using the ARVO dataset, we validate the proposed model and show that heavy-tailed patching times induce long-range dependence (LRD) in the attack surface.

    \item \textbf{Near-Optimal Adaptive Defense via RL:}
    We formulate dynamic defense as a constrained MDP and develop an RL algorithm that achieves near-optimal regret under resource and switching constraints.

    \item \textbf{Defense Switching Cost:}
     We explicitly model the magnitude of policy changes as a switching cost, capturing the operational impact of adjusting defense actions over time. Unlike prior work that penalizes only whether a policy changes, our formulation captures how much the policy changes between consecutive steps. This allows us to quantify the operational cost of large adjustments in defense actions.
\end{itemize}

Together, these contributions establish a quantitative and theoretically grounded foundation for modeling, analyzing, and dynamically defending evolving attack surfaces.

\section{Related Work}

Our work is related to probabilistic approaches to cyber risk analysis.
The industry standard, Factor Analysis of Information Risk (FAIR) framework~\cite{faireframework} formalizes cyber risk quantification through probabilistic factors such as threat events, vulnerabilities, and loss magnitude, providing a common language for risk assessment. Broader treatments of probabilistic cyber insurance and risk evaluation can be found in~\cite{Liu2021CyberInsurance}. While these approaches are influential, they largely assume static system conditions and do not capture the evolving temporal dependencies characteristic of modern attack surfaces.

The concept of the attack surface was formalized by Manadhata and Wing~\cite{surface2010}, and a systematic review~\cite{theisen2018attack} revealed fragmented definitions across hundreds of studies. Recent large-scale analyses, such as~\cite{harry2025measuring}, quantified attack surfaces across government infrastructures, highlighting their scale and complexity. 
These works provide valuable measurement perspectives, but they typically treat the attack surface as a static quantity and do not model how it evolves over time or responds to defense actions.

A related line of work models interdependent vulnerabilities through probabilistic attack graphs~\cite{wang2008attackgraph} and their AI-based extensions~\cite{jin2023prometheus}. Bayesian-network models~\cite{ryan2009bayesian,dynamic2012bayesian,khosravi2020bayesian} have been proposed to estimate compromise probabilities, but these frameworks describe the system at a single snapshot in time. 
We refer to such methods as \emph{snapshot models of risk}, as they capture system state at a fixed point in time and do not represent the sequential or long-range evolution of vulnerabilities.

Efforts to incorporate temporal evolution have used Bayesian networks for industrial and cloud systems~\cite{industrial2018bayesian,graphcloud2022} and Markovian models for sequential attacks~\cite{sequential2019,correlation2022}. 
These studies focus on specific environments rather than the evolution of the attack surface as a whole. Haldar and Mishra~\cite{haldar2017mathematical} and Feutrill et al.~\cite{feutrill2020queueing} observed that vulnerability disclosures exhibit burstiness and long-range dependence, suggesting queueing systems as a natural abstraction. 
However, existing studies do not combine such models with large-scale empirical validation or address the joint temporal and spatial dynamics of vulnerability backlogs.

A key enabler for such modeling is the availability of event-level vulnerability data. The recently released ARVO dataset~\cite{mei2024arvo} provides detailed timestamps of vulnerability discovery and patching. Our work is the first to leverage ARVO to calibrate and validate a queueing-theoretic model of attack surface evolution, bridging theoretical abstractions with empirical vulnerability dynamics.

The rapid integration of AI into both software development and exploitation further complicates this landscape. While large language models (LLMs) can assist in code repair~\cite{berabi2024deepcode,staab2024large}, they also accelerate exploit generation~\cite{fang2024llm,xie2024gradsafe,geiping2024coercing}. Reports by practitioners and agencies~\cite{miessler_linkedin_2024,fbi_ai_cyber_threat_2024} highlight this dual role of AI as both attacker and defender. 
Yet existing models do not provide a quantitative framework for studying how automation simultaneously affects vulnerability discovery, exploitation, and patching dynamics. Our use of an \emph{AI amplification factor} is intended to capture these rate-level effects in a tractable way.

Finally, constrained and safe RL has been studied under budget \cite{amani2021safe,miryoosefi2022simple,shi2023nearconstraint} and policy-adaptation \cite{bai2019provably,huang2022towards,shi2023nearswitch} constraints. Existing studies on policy-adaptation primarily penalize the number of policy changes, i.e., the frequency of updates. Without the magnitude of change, it is not completely possibly to quantify the operational cost of change actions in practical defense settings. In contrast, our formulation models the amount of change in the executed defense action and quantifies the magnitude of consecutive policy adjustments. 
This distinction allows us to model reconfiguration overhead in a more realistic way.
Moreover, previous work does not consider the joint effect of resource-budget and switching-cost constraints on adaptive defense policies.
Our formulation unifies these elements and provides the first theoretical regret guarantees for RL in this setting.

Overall, our study introduces a queueing-theoretic perspective that explicitly models the time-varying and heavy-tailed nature of vulnerability backlogs. By validating the model on real data and integrating it with adaptive control, we provide a quantitative framework for analyzing dynamic attack surfaces and defense resource allocation. Using the framework, we provide an RL-based systematic approach to allocating constrained defensive resources to achieve a significantly improved attack surface dynamics, with the variations in the budget directly taken into account.

\section{System Model}
\label{sec:model}

Consider a single component in an organization’s IT stack, such as an authentication service, file server, or endpoint device. Each component maintains an attack surface, which represents the \textbf{set of currently active vulnerabilities}. For example, in a software release, the attack surface can be defined as the set of unpatched bugs (the definition can also be extended depending on the dependencies to the other systems). In a larger ecosystem, like an enterprise, the IT/OT environment will have a complex attack surface, composed of the combination of the attack surfaces of each component in the system. 
The overall attack surface will exhibit an interplay across the network of components, and its size naturally reflects how many vulnerabilities remain exposed at a given time.
In this paper, to build the initial foundation, we focus on a single subsystem or component. The single-component model can be naturally extended to multi-component or multi-organizational settings; we briefly discuss such extensions in Section~\ref{sec:discussion}.

The size of the associated attack surface at time $t$ is represented by the stochastic process $N(t)$, which we model as the number of jobs in a queue. Here, arrivals correspond to the appearance of new vulnerabilities, and services represent their removal through patching or exploitation\footnote{{It is straightforward to work out all our results (with insights unchanged) to the alternative setting where a vulnerability remains in the system after exploitation and may be exploited multiple times before being patched.}}.
In our formulation, a vulnerability is treated as a job that remains in the system until it is either patched or successfully exploited. While in practice a vulnerability may persist and be exploited multiple times, our abstraction considers the first successful exploit as the completion of the corresponding task.
Let $V(t)$ denote the arrival process of vulnerabilities, and $N_d(t)$ and $N_l(t)$ denote the instantaneous numbers of defended and exploited vulnerabilities at time $t$, respectively, as shown in Fig.~\ref{fig:model}.
The discrete-time evolution of the attack surface size is given by
\begin{equation}
    N(t+1) = \bigl\{ N(t) + V(t) - [N_d(t) + N_l(t)] \bigr\}^+,
    \label{eq:evolution}
\end{equation}
where $\{\cdot\}^+ = \max\{\cdot, 0\}$.
This ensures that the queue length remains nonnegative.
Modeling the attack surface as a queue highlights backlog dynamics, with vulnerabilities accumulating when arrival rates exceed patching capacity and decreasing only when defenses are sufficient. This recursion embodies the key intuition: new vulnerabilities enlarge the attack surface, while patching and exploitation act as concurrent removal mechanisms. As shown later in our empirical analysis (Section~\ref{sec:arvo}), both the arrival and lifetime processes exhibit burstiness, heavy-tailed persistence, and non-stationarity.

\begin{figure}[t]
\centering
\includegraphics[width=\columnwidth]{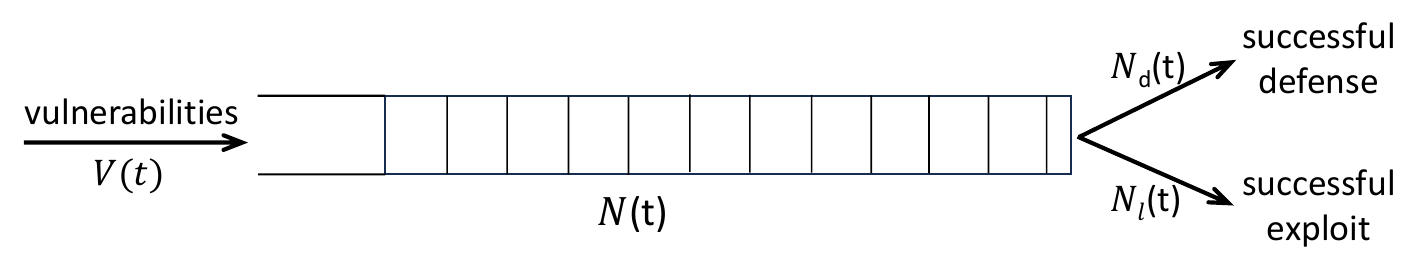}
\caption{Attack surface modeled as a queueing system.
Vulnerabilities arrive via $V(t)$ and depart through competing defense (patching) and exploit processes.}
\label{fig:model}
\vspace{-0.15in}
\end{figure}

Each vulnerability is subjected to a \emph{race condition} between defensive and offensive actions. Let $D_d$ and $D_l$ denote random variables representing the defense time and exploit time, respectively. For each active vulnerability,
\begin{equation}
    D_s = \min\{D_d, D_l\},
\end{equation}
determines its completion time, and the winner of the race
$s = \argmin\{D_d, D_l\}$ increments the corresponding counter $N_s(t)$.
For instance, if $D_d = 10^{-2}$ and $D_l = 10^{-3}$ for a given vulnerability,
the attacker acts ten times faster, leading to an exploit departure.
This race captures the operational reality that a vulnerability remains exposed until either it is patched or it is exploited.

Lastly, we denote the instantaneous 
\emph{allocated} service rates of the defense and exploitation processes 
at time $t$ by $\mu_d(t)$ and $\mu_l(t)$, respectively. These quantities represent the total resources devoted to patching and exploitation.

Although $N_d(t)$ and $N_l(t)$ denote the realized numbers of vulnerabilities that are patched and exploited at time $t$, respectively, their expectations $\mathbb{E}[N_d(t)]$ and $\mathbb{E}[N_l(t)]$ do not, in general, coincide with $\mu_d(t)$ and $\mu_l(t)$. This is due to the race condition between the two processes.

In particular, when both attack and defense are active, the realized rates are jointly determined by the competition between the two processes, and we have
\[
\mathbb{E}[N_d(t)] \leq \mu_d(t), \qquad 
\mathbb{E}[N_l(t)] \leq \mu_l(t),
\]
with equality holding only in the absence of competition. Specifically, 
$\mathbb{E}[N_d(t)] = \mu_d(t)$ if and only if $\mu_l(t)=0$, and 
$\mathbb{E}[N_l(t)] = \mu_l(t)$ if and only if $\mu_d(t)=0$.

To model defensive resource limitations, we consider a system with $m \in \mathbb{N}$ homogeneous parallel servers, where each server processes at most one vulnerability at a time. Thus, at any time $t$, up to $m$ vulnerabilities can be processed concurrently, i.e., $N_d(t) \leq m$. Vulnerabilities are assigned to servers according to a work-conserving policy, so that any idle server immediately begins processing a waiting job if available.

In addition to this concurrency constraint, we impose a global service-rate constraint $\mu_d(t) \leq b$ on the aggregate defense effort. To put it together, the defense process satisfies
\begin{equation}
   \mu_d(t) \leq b, \qquad N_d(t) \leq m,
   \label{eq:resource_constraint}
\end{equation}
where $b$ represents the maximum total service rate that can be sustained across all active servers. This captures shared resource limitations that bound the total achievable patching throughput, even when multiple servers are active. In practice, defensive prioritization or limited concurrency can be modeled by reducing $m$ or adjusting per-server service rates.

These characteristics motivate the adoption of a general $G/G/m$–$b$ model rather than simpler memoryless abstractions. Note that in Kendall's notation, the standard $G/G/m/k$ framework \cite{gautam2012analysis} uses $k$ to denote the maximum number of jobs allowed in the system (i.e., a queue-length capacity constraint). In contrast, our $G/G/m-b$ notation utilizes $m$ to denote the number of parallel servers and $b$ to represent the aggregate capacity constraint imposed on the total service rate, effectively modeling resource-limited defense operations.
This formulation directly captures scenarios where multiple vulnerabilities can be queued but the total patching effort is fundamentally limited by shared resources.

Unlike classical queueing models with independent service rates, both $\mu_d(t)$ and $\mu_l(t)$ may depend on the current attack surface size $N(t)$. As $N(t)$ grows, defenders must divide limited resources across more vulnerabilities, while attackers benefit from the expanded surface.
This coupling creates a feedback effect: when the number of active vulnerabilities increases, the same defense capacity must be shared across more items, slowing down patching on each vulnerability, while attackers face more exposed targets and thus have more opportunities to succeed.
For instance, $\mu_d(t)$ may decrease inversely with $N(t)$, while $\mu_l(t)$ increases proportionally to $N(t)$,
reflecting the asymmetric scalability of attack versus defense.
The interplay between these processes governs the temporal evolution of the attack surface.

\noindent \textbf{Variations of the Model:} Throughout this paper, we consider several specializations derived from our model:

\begin{itemize}
    \item \textbf{Temporal variation analysis:} 
    We use the limiting case $M/G/\infty$, which isolates temporal effects such as heavy-tailed persistence and LRD without capacity constraints, to build a theorem on how the heavy-tailed nature of the arrival and service processes affect the attack surface variations.
    \item \textbf{Data integration:} 
    In Section~\ref{sec:arvo}, the model is instantiated as $G/G/m$–$b$ in its full generality to capture bounded defense capacity and bursty vulnerability arrivals observed in the ARVO dataset.
    \item \textbf{Dynamic defense design and optimization:} 
    In Section~\ref{sec:RL}, we consider a setting where the defense rate $\mu_d(t)$ is adjusted over time, subject to the constraint $\mu_d(t) \le b$. 
    Here, $b$ is fixed, and the defense behavior changes through the time-varying control of $\mu_d(t)$.
    
\end{itemize}

\section{Problem Formulation}
\label{sec:problem}

Building on the stochastic queueing model above, we now formulate the adaptive defense problem. The objective is to allocate limited defense resources over time to minimize long-term exposure and breach costs, while accounting for reconfiguration (switching) overhead.

At each time step $t$, the defender selects a defense (patching) rate $\mu_d(t)$ subject to the resource-budget constraint $\mu_d(t)\!\le\!b$, while the effective exploitation rate $\mu_l(t)$ evolves according to the coupled arrival–service dynamics defined earlier. 
The resulting queue length $N(t)$ captures the number of active vulnerabilities and thus represents the instantaneous \emph{attack surface size}. 
The control task is to design a policy $\pi=\{\mu_d(t)\}_{t=1}^T$ that balances:
(i) risk reduction through faster patching, 
(ii) efficiency in total resource use, and 
(iii) stability against frequent reallocations.

We study two core problems that together form the foundation of our framework. The first focuses on data-driven model inference and empirical validation, while the second builds on the first and develops an adaptive control policy for dynamic defense allocation. Each problem highlights a distinct analytical or algorithmic component of the overall approach:

\begin{enumerate}

    \item[\textbf{(P1)}] \textbf{Data-Driven Characterization.}

    Given event-level vulnerability data containing discovery and patch timestamps, we model the attack surface as a non-stationary stochastic process. Rather than fitting a single stationary distribution, we seek to identify a segmentation of the time horizon into quasi-stationary intervals, together with segment-wise model parameters.
    Given event-level vulnerability data containing discovery and patch timestamps, we model the attack surface as a non-stationary stochastic process. Rather than fitting a single stationary distribution, we seek to identify a segmentation of the time horizon into quasi-stationary intervals, together with segment-wise model parameters. Let $\{\mathcal{T}_k\}_{k=1}^K$ denote a partition of the time axis into $K$ segments, and let $\hat{P}_k$ denote the empirical queue-length distribution (QLD) in segment $\mathcal{T}_k$. We jointly estimate the number of segments $K$, the segmentation $\{\mathcal{T}_k\}_{k=1}^K$, and the model parameters $\{\theta_k\}_{k=1}^K$ by solving
    \begin{equation}
        \min_{K,\,\{\theta_k\},\,\{\mathcal{T}_k\}} \;\sum_{k=1}^K d\big(\hat{P}_k, P(\theta_k)\big),
    \end{equation}
    where $d(\cdot,\cdot)$ is a divergence metric (KL divergence in our implementation), $\theta_k \in \Theta$ parameterizes the $G/G/m$-$b$ queueing model in segment $k$, with $\theta_k = \{m, b, F_{IA}, F_{ST}\}$.
    Solving (P1) yields a segmented and validated model, where each quasi-stationary segment is parameterized independently, capturing the non-stationary and heavy-tailed behavior of vulnerability dynamics observed in real data. The number of segments $K$ is selected in a data-driven manner based on diminishing returns in the fitting error (e.g., via an elbow criterion), ensuring a balance between model accuracy and complexity. This segmented representation provides the empirical foundation for the adaptive defense control in (P2).

    \item[\textbf{(P2)}] \textbf{Learning-Based Adaptive Defense.} 
    
    Using the quasi-stationary segments identified in (P1), we now formulate the adaptive defense problem. We aim to develop a learning policy that adaptively controls $\mu_d(t)$ to optimize defense resource allocation.
    The objective is to minimize the cumulative cost, given by
    \begin{align}
        &\min_{\{\mu_d(1:T)\}}
        \sum_{t=1}^{T} 
        \mathbb{E}\Big[
            C(N(t))
            + |\mu_d(t)|
            \nonumber\\[-4pt]
            &\qquad\qquad
            + g\!\left(|\mu_d(t)-\mu_d(t-1)|\right)
        \Big] \label{eq:rl_obj}\\
        &\text{sub. to:}\quad 
        0 \le \mu_d(t)\le b,\quad t=1,\dots,T . \nonumber
    \end{align}
    where $|\cdot|$ denotes the absolute value. The expectation reflects stochastic variability in attack arrivals and patching delays. The first term $C(N(t))$ penalizes successful exploits proportional to the instantaneous attack surface size, the second term $\mu_d(t)$ captures the defense effort at time $t$, and the third term $g(\cdot)$ models the switching cost for defense reconfiguration. Unlike the abstract policy-adaptation cost used in existing works, this switching cost is proportional to the \emph{magnitude} of change in the executed defense action. The second term directly captures the cumulative defense effort over time, corresponding to total patching resources. In contrast, the argument of $g(\cdot)$ measures the change in defense rate between consecutive time steps, i.e., $|\mu_d(t)-\mu_d(t-1)|$, reflecting the operational cost of reconfiguring defense actions rather than the frequency of policy updates. The learned model from (P1) provides the system dynamics and resource parameters used in (P2), thereby enabling a data-driven approach to optimizing defensive resource allocation.
\end{enumerate}

\textit{Remark:} In this work, we assume that all vulnerabilities are treated with equal priority. Our model does not explicitly differentiate vulnerabilities based on severity or potential impact. Instead, the system-level effect of vulnerabilities is captured through the cost function, which reflects the aggregate impact of successful exploits. Incorporating heterogeneous vulnerability classes is also an important extension and we leave it for future works.
    
Before introducing adaptive defense strategies, we first analyze a few simple scenarios under basic static allocation case to build some intuition on attack surface dynamics.

\section{Illustrative Examples: Static Resource Allocation}
\label{sec:static}

We begin with a simple baseline that assumes a fixed defense allocation. This choice is \textbf{not} intended to represent a realistic situation, but rather to build intuition about how defense capacity and vulnerability arrivals interact in a queueing system. In this example, we assume memoryless arrivals and departures from our queue. These insights will help us better interpret the results in the later sections, where we relax the memoryless assumption.

\subsection{M/M/$\infty$ Abstraction}
\label{sec:m-m-inf}

The memoryless nature of the arrival and service processes leads to an M/M/$\infty$ queue, which removes temporal correlations and capacity interactions, allowing us to focus on how vulnerability arrivals and fixed defense capacity jointly determine attack surface size and exploitation rates.

In our evaluations, we use a sampled discrete-time version of the continuous-time M/M/$\infty$ system, consistent with the discrete-time evolution in Eq. (\ref{eq:evolution}). The resulting system can be represented as a Markov chain with a countable state space.

As described in Section~\ref{sec:model}, in our analyses we assume that organizations have fixed amount of cyber resources and allocate the full amount without a variation from one episode to another. In particular, the rate of the successful defense process $\mu_d = \alpha \lambda$ remains constant and thus independent of $N(t)$ where $\lambda$ is the arrival rate for $V(t)$, the (stationary) vulnerability process, and $\alpha$ is a constant such that $\alpha \lambda \leq b$. Here, $\mu_d$ denotes the total defense rate.
The variable $\alpha$ signifies the intensity of the defense. As an example, if $\alpha=1$, we say the defense rate is $100$\%  or if $\alpha=0.5$, we say the defense rate is $50$\%.

\begin{figure}[!t]
  \centering
  \includegraphics[width=0.7\linewidth]{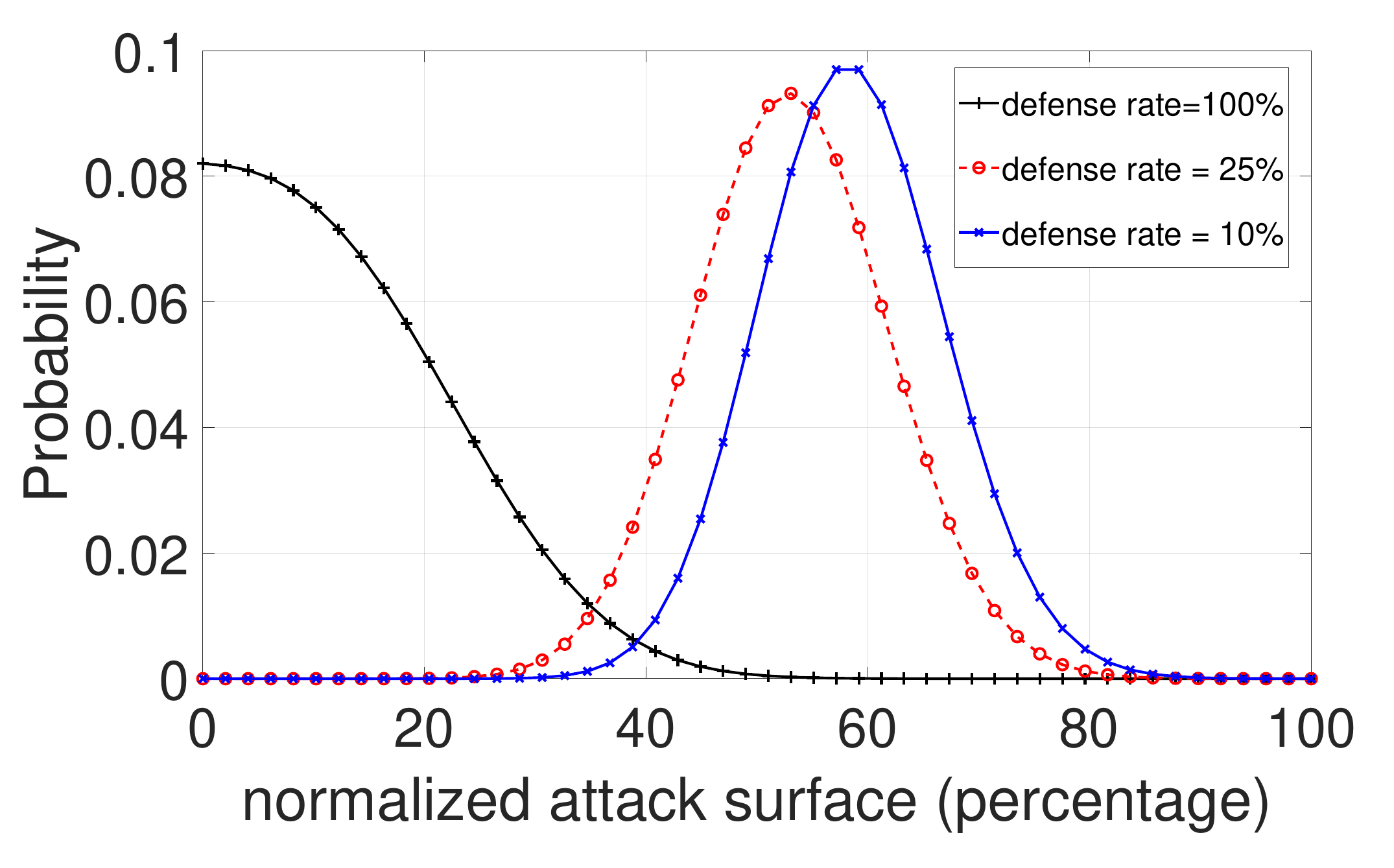}
  \caption{Steady-state distribution of the normalized attack surface size under different defense rates. Lower defense rates shift the distribution toward larger attack surfaces, increasing exposure risk. The surface size is normalized to the range $[0,100]\%$ for visualization.}
  \label{fig:attack_surface_distribution}
\end{figure}

\begin{figure}[!t]
  \centering
  \includegraphics[width=0.6\linewidth]{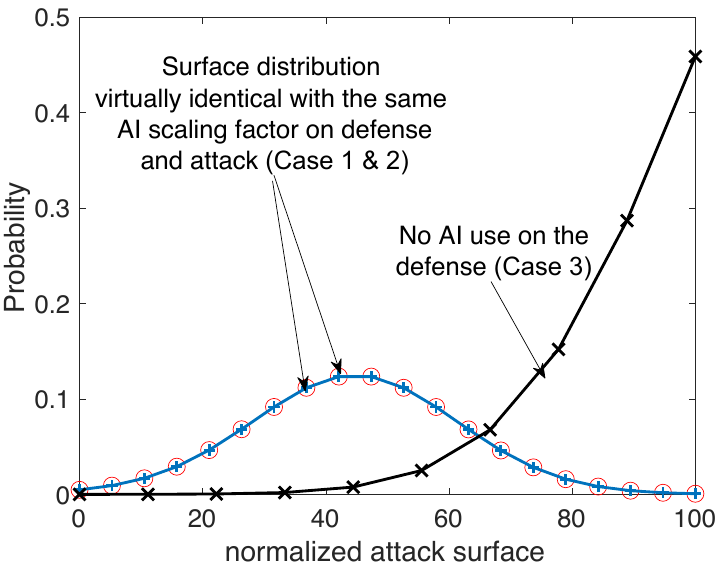}
  \caption{Steady-state distribution of the normalized attack surface size under different AI usage scenarios. Symmetric AI scaling leaves the distribution unchanged, whereas AI applied only on the attack side (no AI on defense) shifts the distribution toward larger attack surfaces, increasing exposure risk.}
    \label{fig:attack_surface_distribution_AI}
\end{figure}

On the attacker side, we assume the rate of the successful exploitation process grows proportional to $N(t)$ as a larger number of active vulnerabilities attracts more attack attempts targeting the exposed surface, growing proportional to the attack surface size. Hence,
\( \mu_l(t) = \beta \lambda N(t), \)
where $\beta$ is a constant denoting the intensity of attacks on the organization. 

Note that this is not a strict M/M/$\infty$ model because the service rate includes both load-independent and load-dependent components (i.e., $\lambda$-independent and $\lambda$-dependent). We use this model as a simple abstraction to illustrate the key dynamics of the system.

In Fig.~\ref{fig:attack_surface_distribution}, we illustrate the steady-state distribution of the attack surface size. Since the underlying system is unbounded, the queue length is normalized by a reference value (e.g., a representative steady-state level) and expressed as a percentage for visualization purposes. This normalization provides a relative measure of the attack surface size and does not impose any artificial bound on the system. We set the vulnerability arrival rate to $\lambda = 100$ per unit time and the attack rate to $\beta = 0.001$. The figure shows the resulting distributions for three defense rates: $\alpha = 100\%$, $25\%$, and $10\%$.

The curve with the full ($100$\%) defense rate leads to a small expected surface size of $\EE{N(t)}=13.2$\% and a time-averaged breach rate of $\lim_{t \to \infty}\frac{1}{t}\EE{N_l(t)}=6.79$ breaches per unit time, much lower than the time-averaged defense rate, which happens to be
\[ \lim_{t \to \infty}\frac{1}{t}\EE{N_d(t)}=\lambda-\lim_{t \to \infty}\frac{1}{t}\EE{N_l(t)}=93.21. \] 
As the defense rate decreases, the expected surface size and exploitation rate increase sharply.
At $\alpha=50$\%, the expected surface size grows to $52.1$\% with a substantial breach rate of $69.69$. This means, there are more than twice as many breaches as there are successful defenses.  Breach rate grows even further to $84.16$\% as we further decrease the defense rate to $10$\%. The interesting thing here is that, the expected attack surface size in this final situation remains at $57.44$, very close to the case with $\alpha=0.25$, despite a significant decrease to $\alpha=0.1$.

The above situation can be explained by the fact that, as the surface size grows, much of the departures are caused by a breach rather than a successful patch. As a result, the average surface size remains relatively static, however the instantaneous fluctuations above the mean are quickly exploited by the growing set of attackers. As a result, even though one may think that the attack surface is not much higher, the breach rate grows at a much higher pace with reduced defense rate.

Our initial results demonstrate a \textbf{phase-transition} phenomenon in the attack surface size distribution. Once the defense rate goes below a certain point, the surface distribution shifts sharply and abruptly to the right. Further reducing the defense rate beyond that shifting point does \textit{not} change the distribution considerably. 
This observation underlines the importance of keeping a disciplined security posture for an organization and the resources should be allocated to have a rate at least identical to (if not much higher than) the rate of growth in vulnerabilities.  
For example, if the defense rate is set to $\alpha = 200$\%, the expected attack surface size remains below \textit{a single vulnerability}, while the average breach rate is merely $0.29$ per unit time! Our results show that a small increase in resources can substantially improve protection against breaches, whereas insufficient resource allocation leads to sharply degraded security performance.

\subsection{AI-driven Dynamics}
\label{sec:ai_driven}

We continue the illustrative analysis with the memoryless model to study how AI-driven acceleration of vulnerability discovery and response affects attack-surface dynamics, captured through a simple rate-scaling abstraction.
Let us introduce an \emph{AI amplification factor} that scales the arrival and service rates and study its effect under symmetric and asymmetric amplification of attack and defense capabilities. We scale the vulnerability arrival rate and the exploitation rate by the same factor $a$, so $\lambda \rightarrow a\lambda$ and $\mu_l\rightarrow a \mu_l$. In the second part, we will scale the defense rate $\mu_d$ with the same rate, to evaluate the impact of the use of AI on the defensive side, as well as the attackers' side.

In this example, we use the same amplification factor across the three pillars of the model. Our intention here is to illustrate the drastic shift in temporal dynamics even when there is no change in the spatial dynamics of the surface. Also, we show the dynamics under asymmetry in AI amplification between attack and defense, where the asymmetry is in the favor of the attackers.

Similar to Section~\ref{sec:m-m-inf}, we use $\mu_d(t)=\alpha \lambda$, where $\alpha$ is the defense rate and $\mu_l(t)=\beta \lambda N(t)$. We provide the probability mass function for the attack surface for three different situations:
\begin{enumerate}
    \item No AI is used (i.e., $a=1$) on either the offense or the defense. Here, we choose the vulnerability arrival rate $\lambda=5$, defense rate $\alpha = 50\%$, and the attack rate $\beta=0.005$;
    \item AI used on the attack and defense with an AI amplification factor of $a=4$ on both sides. Here, the $\lambda = 20$, is amplified by $a$ compared to the previous case and both the attack and defense resources are also benefiting the same rate of amplification.
    \item AI is used on the attack side only. As a result, the overall defense resources remain at $2.5$ units as in the original situation, while the vulnerability arrival rate and the attack rate scale with the AI amplification factor $a=4$. 
\end{enumerate}
In Fig.~\ref{fig:attack_surface_distribution_AI}, we illustrate the attack surface distributions for Cases (1-3) above. For Case 1 (no AI), the expected surface size remains at $\EE{N(t)}=42.48$\% at a defense rate of $2.5$ patches per unit time while the exploit rate is $2.06$ exploits per unit time.

Notably, when AI-driven acceleration is applied symmetrically to both attack and defense, the steady-state distribution of the attack surface remains unchanged, even though vulnerabilities arrive and are processed at a faster rate. However, all event rates scale with the amplification factor. In particular, \textit{the exploitation rate increases from $2.06$ to $8.24$ exploits per unit time}, i.e., by a factor of $a$. Thus, while the shape of the attack-surface distribution is preserved, successful exploits occur more frequently due to the accelerated underlying dynamics. 
This observation highlights that symmetric acceleration primarily compresses the time scale of events rather than altering the distribution itself, and suggests that simply matching attack acceleration with defensive acceleration may be insufficient to reduce exploitation frequency.

Lastly, if the defense does not use AI, while the AI is used on the attack, the expected surface size substantially increases to $80.54\%$. The exploitation rate is scaled up to $13.23$ per unit time, demonstrating a super-linear increase with AI amplification factor. This observation shows that an asymmetry in the AI usage in the favor of the attack side leads to a disproportionately higher increase in the rate of successful exploits. 
This scenario illustrates how asymmetric acceleration on the attack side can significantly worsen backlog and exploit rates under fixed defense capacity.

The illustrative examples above demonstrate how fixed and AI-amplified defense rates influence the steady-state behavior of the attack surface. We now turn to real-world data to assess whether these modeled dynamics hold in practice. In the next section, we integrate empirical vulnerability data from the open-source software repositories and validate our queueing-theoretic framework against observed attack surface behavior.

\section{Data Integration: Software Supply Chain}
\label{sec:arvo}

In this section, we apply our queueing-theoretic framework to empirical vulnerability data in order to map the dynamics of an actual attack surface onto the model. Our objective is to identify how observable quantities including vulnerability disclosures, patching times, and backlog evolution project onto the underlying queueing variables. For that purpose, we estimate the corresponding system parameters and resource constraints using available data. Once such a correspondence is established, the model serves as a basis for reasoning about defense strategies and mitigation policies.

To that end, we implement our framework in the use case of open-source software supply chain. In our implementation, we use the ARVO dataset~\cite{mei2024arvo}, which aggregates more than 4{,}000 reproducible vulnerabilities from Google’s OSS-Fuzz infrastructure, spanning hundreds of large-scale open-source C/C++ projects. Each record includes rich metadata such as report and fix timestamps, sanitizer type (ASan, MSan, UBSan), crash category (for example, heap buffer overflow or use after free), and severity level (low, medium, or high). This information enables precise event-level tracking of vulnerability discovery and patching. The dataset’s granularity makes it particularly well suited for queueing-based modeling: vulnerability disclosures correspond to \emph{arrivals}, while patch completions represent \emph{service completions}.

Using this dataset, we first analyze the empirical dynamics of vulnerability arrival and departures, demonstrating that our queueing-theoretic framework provides an accurate and interpretable representation of real-world attack surface evolution. We show, using KL divergence and comparative fits, that the queueing model accurately reproduces the observed queue size dynamics with near-empirical precision, thereby validating it as a realistic abstraction of complex software ecosystems. We further characterize the heavy-tailed nature of both arrival and service processes, which reveals a systemic bottleneck that slows patch deployment and motivate the need for adaptive, data-driven defense strategies. In the next section, we build on these findings and propose a RL–based dynamic defense allocation algorithm that optimally distributes defensive effort to manage the attack surface size under resource and switching constraints.

\subsection{Validating the Proposed Queueing Model on the ARVO Dataset}

The following steps outline our complete empirical pipeline for constructing, segmenting, and validating the queueing model on the ARVO dataset, thereby linking theoretical formulation with real-world vulnerability dynamics.

\textbf{Step 1. Queue reconstruction and exploratory analysis:}
We first align vulnerability discovery and patching timestamps to reconstruct the time series of open vulnerabilities $N(t)$.
The ARVO dataset used here provides exceptionally high resolution, tracking over $4410$ vulnerabilities across $260$ unique open-source projects from December 2016 to May 2024. Each record includes exact event-level timestamps for vulnerability discovery and patching, alongside critical metadata such as severity (including $1150$ High-severity cases), detection sanitizer (e.g., asan, msan), and specific crash types like Heap-buffer-overflow and Use-of-uninitialized-value. 
Figure~\ref{fig:queue_vs_time} shows clear expansion and contraction phases, with bursty arrivals followed by delayed patching, confirming non-stationarity due to capacity limits in patching throughput.

\begin{figure}[t]
    \centering
    \includegraphics[width=0.8\linewidth]{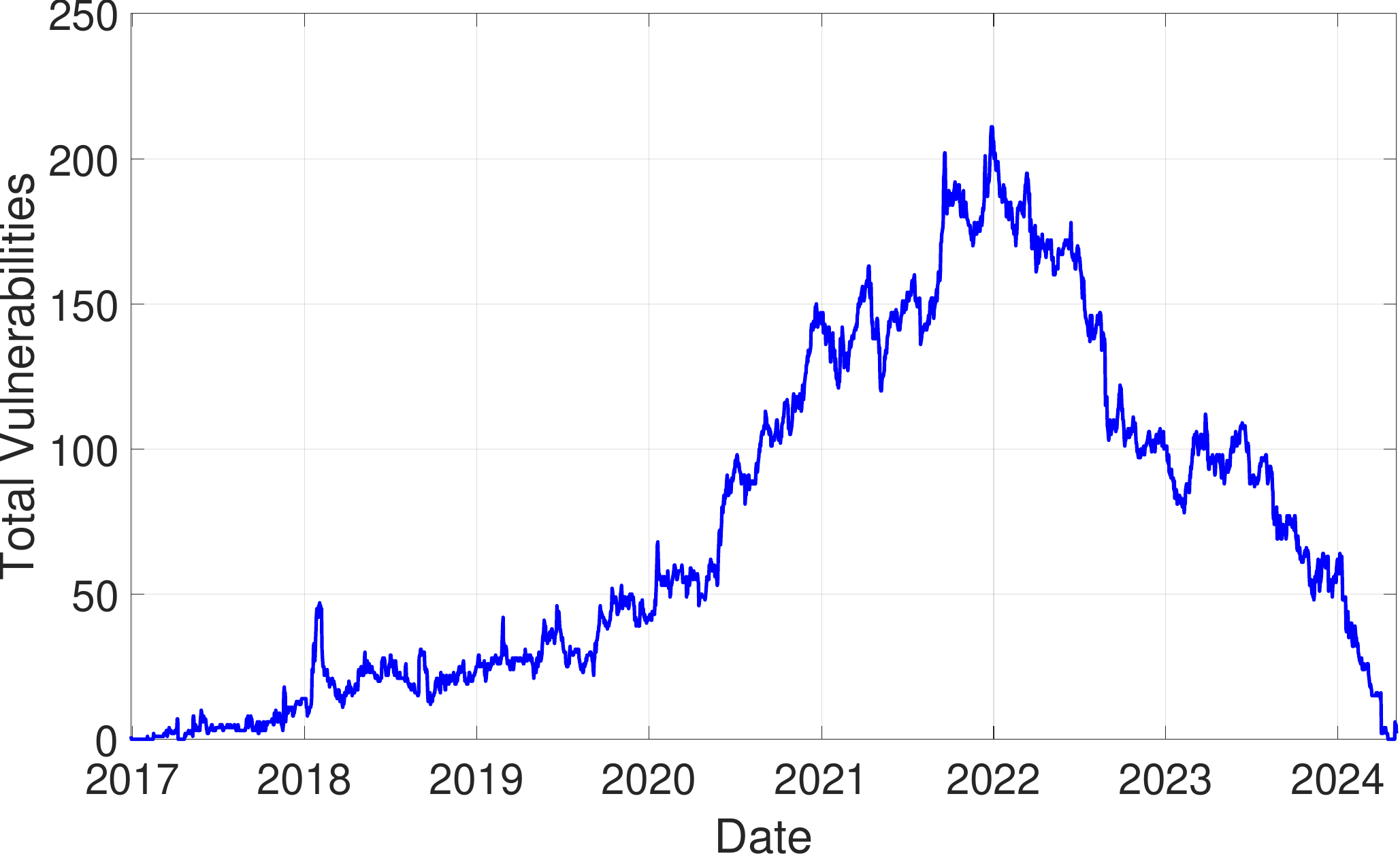}
    \caption{Temporal evolution of the attack surface size, $N(t)$, in the ARVO dataset, showing bursty discovery, delayed patching, and non-stationary behavior.}
    \label{fig:queue_vs_time}
\end{figure}

\begin{figure}[t]
    \centering
    \includegraphics[width=0.8\linewidth]{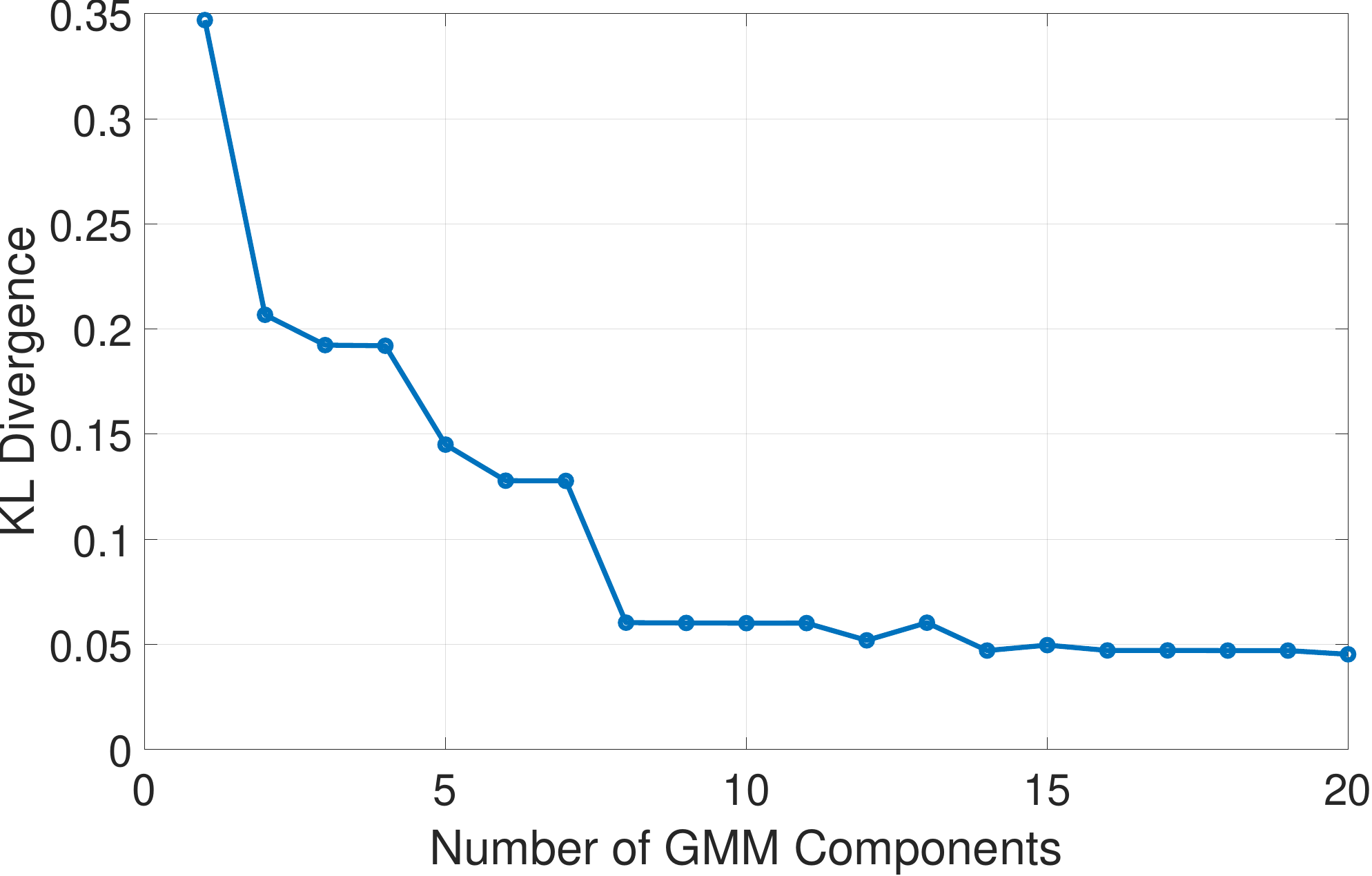}
    \caption{KL divergence versus the number of Gaussian mixture components. Fit quality improves rapidly up to about ten components, after which additional components yield diminishing returns.}
    \label{fig:kl_vs_components}
    \vspace{-1em}
\end{figure}

\textbf{Step 2. Segmentation via Gaussian mixture modeling:}
To capture these evolving patterns, we employ a \emph{segmented modeling approach}. 
Segments are defined as mutually exclusive, closed intervals $[t_{start}, t_{end}]$ that collectively partition the entire observation period. Within each interval, arrivals and departures are analyzed independently to capture localized distributional shifts and non-stationarities in the time series. 

To identify these segments in a data-driven manner, we use a Gaussian mixture model (GMM) fitted to the empirical QLD. The GMM captures multimodal structure in the data, where each component corresponds to a distinct quasi-stationary regime. This allows us to decompose the non-stationary process into a collection of locally quasi-stationary segments.

The number of mixture components is selected based on the KL divergence elbow curve shown in Figure~\ref{fig:kl_vs_components}, which indicates that model fit improves significantly up to around ten components and then saturates. This segmentation enables accurate modeling of non-stationary dynamics through locally stationary approximations.

\textbf{Step 3. Segment-wise parameter estimation:}
Within each segment, we estimated inter-arrival and service distributions and calibrated the queue parameters \((m,b)\) of a \(G/G/m\)–\(b\) model by minimizing the KL divergence between empirical and simulated QLDs. In this segmented setting, the parameter \(b\) represents the \emph{mean available defensive resource} rather than the maximum capacity used in the next section, reflecting the average effective throughput observed in each operational regime. The resulting segmented models accurately reproduced the multimodal and time-varying dynamics of the attack surface, confirming that segmentation is essential for representing the non-stationary evolution observed in ARVO.

\textbf{Step 4. Statistical characterization of IA and ST:}
After segmentation, we analyzed the stochastic structure within each stationary window 
to identify appropriate parametric distributions for inter-arrival ($F_{IA}$) and service times ($F_{ST}$). We evaluated a wide range of candidate distributions using five divergence metrics (KL, TVD, L2, JSD, and Wasserstein). 
Heavy-tailed mixtures consistently outperformed non-heavy-tailed models, such as the exponential distribution, which underestimated tail mass and failed to capture persistence effects. As illustrated in Figure \ref{fig:comp1_ia_st_fits}, for the first segment (weeks 0–64), the best-fitting non-heavy-tailed model (exponential) yielded significantly higher KL divergences of 1.31 for IA and 1.34 for ST. In contrast, the heavy-tailed loglogistic and Gamma–InverseGaussian distributions achieved much lower KL divergences of approximately 0.77 and 0.42, respectively.

Intuitively, a heavy-tailed distribution implies that very large delays are not rare. This is evident in Fig. \ref{fig:comp1_ia_st_fits}, where the empirical curves remain substantially above the exponential fit at large times, indicating significantly more probability mass in the tail.

These results confirm that both vulnerability discovery and patching processes are fundamentally heavy-tailed, justifying our use of more general $G/G/m$ abstractions to capture long-lived exposure and temporal clustering.

\begin{figure}[t]
    \centering
    \includegraphics[width=\columnwidth]{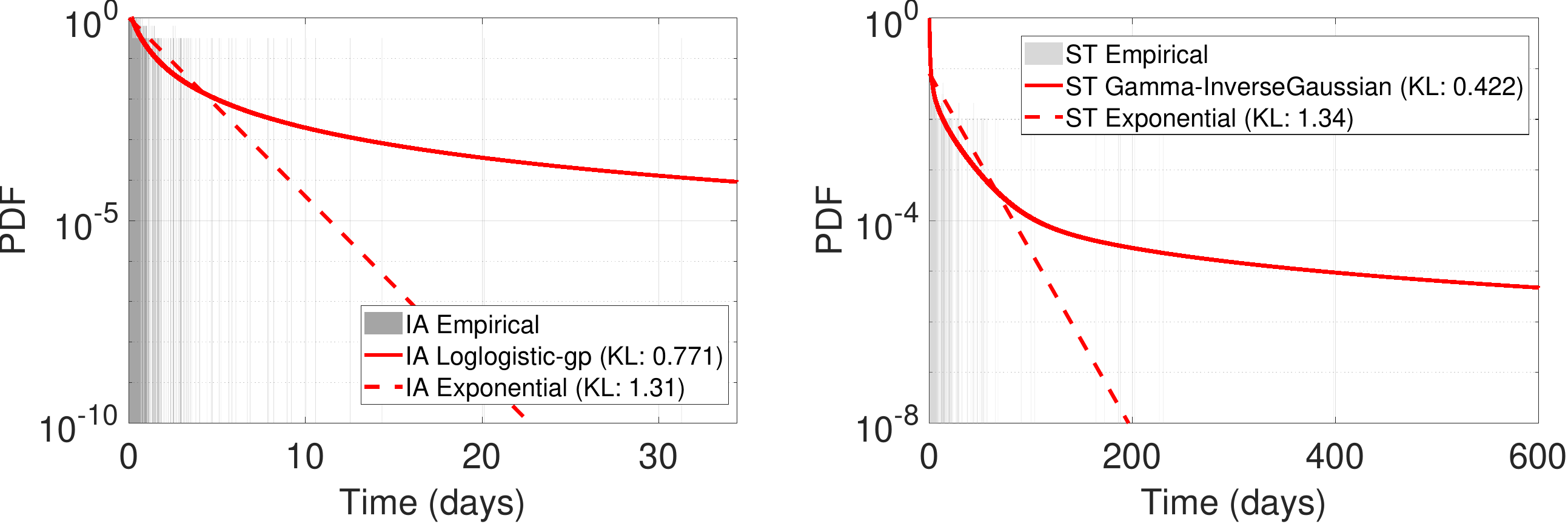}
    \caption{IA and ST distributions for Component 1 (weeks 0–64). 
    Loglogistic-general Pareto fits IA (KL $\approx$ 0.77); Gamma–InverseGaussian fits ST (KL $\approx$ 0.42).}
    \label{fig:comp1_ia_st_fits}
    \vspace{-0.1in}
\end{figure}

\textbf{Step 5. Segment-wise queue model fitting and validation:}
Finally, we validate that the segmented $G/G/m$–$b$ model accurately reproduces the empirical QLD observed in ARVO. Figure~\ref{fig:final_qld} compares the empirical QLD with the segmented, bootstrap, and ten-component GMM fits. In the bootstrap model, samples are drawn directly from the empirical data rather than from any fitted distribution, providing a nonparametric characterization. As shown, the segmented $G/G/m$–$b$ model reproduces the empirical QLD with high fidelity, accurately capturing both the multimodal structure and the heavy-tailed persistence observed in practice. Quantitatively, the KL divergence between the empirical and simulated distributions is $\mathbf{0.1072}$, comparable to the nonparametric bootstrap ($\mathbf{0.1058}$).

Across segments, the number of servers $m$ remains relatively stable ($220$–$250$), while the effective resource capacity $b$ varies widely ($30$–$270$), reflecting changes in patching throughput and organizational defense posture. Together, these results confirm that \textit{empirically calibrated queueing abstractions replicate real-world attack surface dynamics with sub–$0.11$ KL divergence, demonstrating near-empirical precision}.

\begin{figure}[t]
    \centering
    \includegraphics[width=\columnwidth]{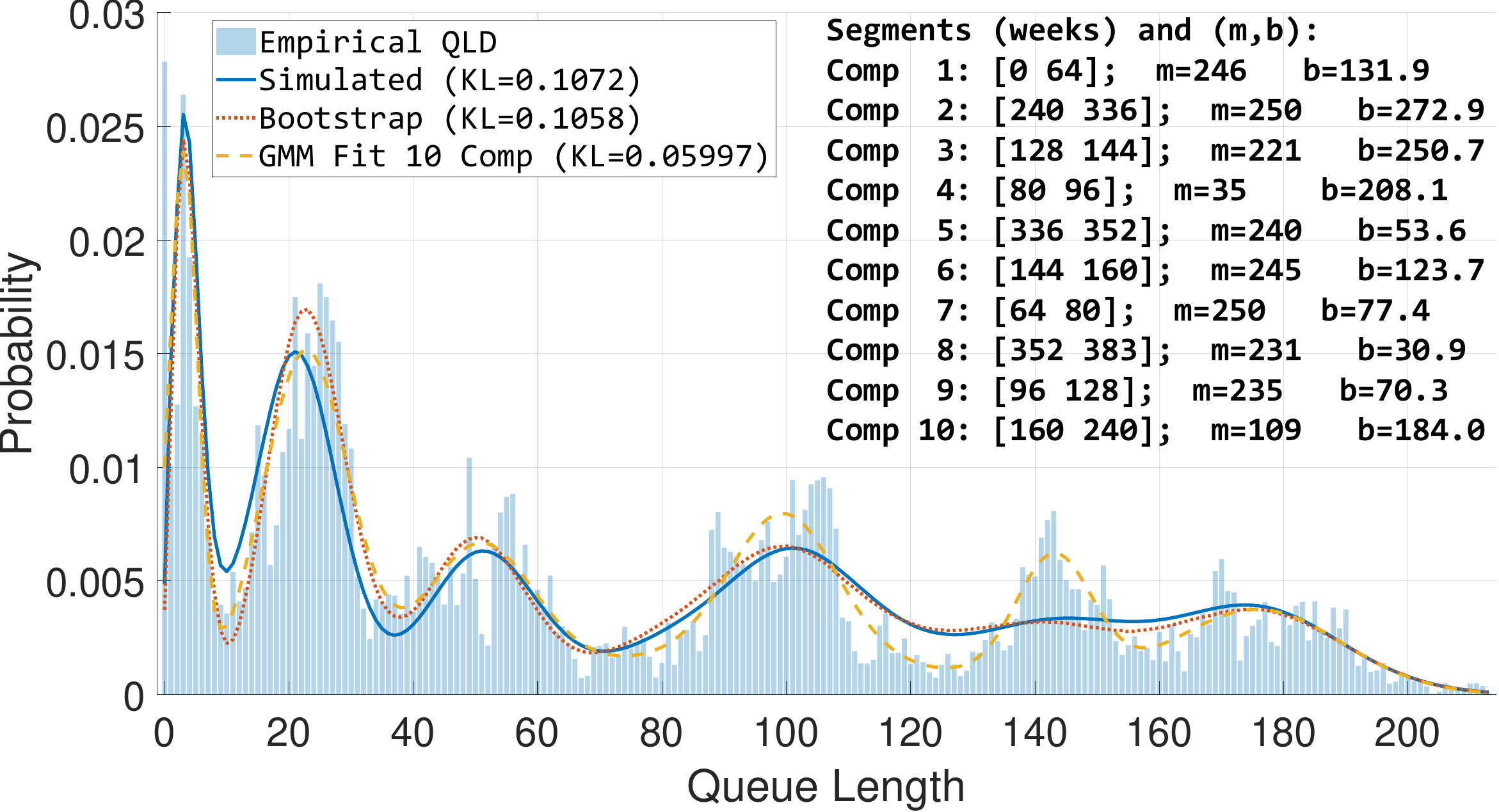}
    \caption{Final integrated model: empirical QLD compared with segmented, bootstrap, and 10-component GMM fits. In the segmented $G/G/m$–$b$ model, $b$ denotes the mean available defensive resource rather than the maximum capacity used in the global formulation.}
    \label{fig:final_qld}
    \vspace{-0.1in}
\end{figure}

\subsection{Temporal Characterization of Vulnerabilities of Software Releases}
\label{subsec:temporal}
The empirical fits above confirm that vulnerability service times follow a heavy-tailed distribution, with decay exponents ($u$) in the range \( 2 < u < 3 \). Consequently, vulnerabilities tend to remain active on the attack surface for extended periods, depending on the underlying defense and exploit dynamics. Our results are corroborated by prior measurements~\cite{lippmann2005evaluating} on specific networks, in which similar observations were made that the time a vulnerability remains exploitable follows a heavy-tailed law with a tail in \( D_s \) that decays slowly in a non-exponential fashion. Building on this empirical evidence, we formally show that when the vulnerability patching process exhibits such heavy-tailed behavior, the attack surface size \( N(t) \) develops \textbf{long-range dependence (LRD)}, even if the vulnerability arrival process itself is memoryless.

To formalize this observation, we model the vulnerability dynamics using an
$M/G/\infty$ queue. Recall that $V(t)$ denotes the vulnerability arrival process, and assume that it is a Poisson process with rate $\lambda$. Let $F(t)$ denote the cumulative distribution function of the service time $D_s$\footnote{For a vulnerability, $D_s = \min(D_d, D_e)$.}. The following result characterizes the temporal dependence structure of the resulting attack surface process $N(t)$.

\begin{thm}\label{thm:LRD}
Consider an $M/G/\infty$ queue in operation since time $t=-\infty$. Then, the queue-length process $N(t)$ is wide-sense stationary, and its covariance function is
\begin{equation}
\Cov(N(t),N(t+h)) = \lambda \int_h^\infty \bigl(1-F(s)\bigr)\,ds, \qquad h \ge 0.
\end{equation}
Furthermore, if the service-time tail is regularly varying, i.e.,
\begin{equation}
1-F(s) \sim L(s)s^{-\alpha}, \qquad 1<\alpha<2,
\end{equation}
where $L(\cdot)$ is a slowly varying function:
\[
\lim_{s\to\infty}\frac{L(cs)}{L(s)}=1, \qquad \forall c>0,
\]
then
\begin{equation}
\Cov(N(t),N(t+h)) \sim \frac{\lambda}{\alpha-1}L(h)h^{-(\alpha-1)}, \qquad h\to\infty,
\end{equation}
and therefore $N(t)$ exhibits long-range dependence.
\end{thm}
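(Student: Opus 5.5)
I will represent the queue through a Poisson point process. Let the arrivals form a homogeneous Poisson process on $\mathbb{R}$ with rate $\lambda$. Attach i.i.d.\ service times with distribution $F$, independent of the arrival epochs. By the marking theorem, the pairs (arrival epoch $\tau$, service time $S$) form a Poisson process $\Pi$ on $\mathbb{R}\times[0,\infty)$ with intensity $\lambda\,d\tau\,dF(s)$.

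Since the queue has infinitely many servers and started at $-\infty$, we have $N(t)=\Pi(A_t)$, where
\[
A_t=\{(\tau,s):\tau\le t<\tau+s\}.
\]
The intensity measure of $A_t$ is $\lambda\int_{-\infty}^{t}(1-F(t-\tau))\,d\tau=\lambda\,\EE{D_s}$. This is finite because $\alpha>1$ gives a finite mean; in the general first part I assume $\EE{D_s}<\infty$. So $N(t)$ is Poisson with a mean that does not depend on $t$. Translation invariance of $\Pi$ in $\tau$ shows that the law of $(N(t),N(t+h))$ depends only on $h$, which gives wide-sense (in fact strict) stationarity.

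For the covariance, I write
\[
N(t)=\Pi(A_t\setminus A_{t+h})+\Pi(A_t\cap A_{t+h}),\qquad N(t+h)=\Pi(A_{t+h}\setminus A_t)+\Pi(A_t\cap A_{t+h}).
\]
The three regions are disjoint, so their Poisson counts are independent. Hence
\[
\Cov(N(t),N(t+h))=\mathrm{Var}(\Pi(A_t\cap A_{t+h}))=\lambda\,|A_t\cap A_{t+h}|,
\]
where $|\cdot|$ is the intensity measure. The intersection consists of the customers with $\tau\le t$ and $\tau+s>t+h$. Its measure is $\lambda\int_{-\infty}^t (1-F(t+h-\tau))\,d\tau$. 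Substituting $u=t+h-\tau$ turns this into $\lambda\int_h^\infty(1-F(u))\,du$, as claimed.

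For the asymptotics, I apply Karamata's theorem for integrals of regularly varying functions. If $1-F(s)\sim L(s)s^{-\alpha}$ with $\alpha>1$, then
\[
\int_h^\infty L(s)s^{-\alpha}\,ds\sim \frac{L(h)h^{1-\alpha}}{\alpha-1}.
\]
To avoid simply citing the theorem, I can sketch it directly. Substitute $s=hx$ to get $h^{1-\alpha}L(h)\int_1^\infty \frac{L(hx)}{L(h)}x^{-\alpha}\,dx$. The ratio $L(hx)/L(h)$ tends to $1$ pointwise. Potter's bounds give $L(hx)/L(h)\le C x^{\epsilon}$ for large $h$, and taking $\epsilon<\alpha-1$ lets dominated convergence yield $\int_1^\infty x^{-\alpha}dx=1/(\alpha-1)$. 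The asymptotic equivalence $\sim$ passes under the integral because the tail integral is dominated by large $s$.

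Finally, since $\alpha<2$, the exponent satisfies $\alpha-1\in(0,1)$. Then $\int_0^\infty \Cov(N(t),N(t+h))\,dh=\infty$, because $h^{-(\alpha-1)}L(h)$ is not integrable. This is the standard definition of long-range dependence, with Hurst parameter $H=(3-\alpha)/2\in(1/2,1)$.

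The main technical obstacle is justifying Karamata's theorem, specifically the uniform control of $L$ through Potter's bounds. The Poisson-process covariance computation itself is routine.
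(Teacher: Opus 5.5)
Your proof is correct, and it reaches the covariance formula by a different route than the paper. The paper discretizes time into bins of width $\delta$ and writes $N(t)=\sum_i I_i(t)$, where $I_i(t)$ indicates that an arrival in bin $i$ is still present at time $t$. It uses independence across bins to reduce the covariance to a sum of per-bin terms $\lambda\delta[1-F(\tau-i\delta)]-o(\delta)$, and then lets $\delta\to 0$; wide-sense stationarity is read off from the constant mean and the lag-only dependence of the covariance. You instead lift the arrivals to a marked Poisson random measure on $\mathbb{R}\times[0,\infty)$ and split $A_t\cup A_{t+h}$ into three disjoint regions. The covariance is then exactly the variance, hence the intensity, of the shared region $\{\tau\le t,\ \tau+s>t+h\}$. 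The underlying mechanism is the same in both arguments: only customers present at both epochs contribute. Your version, however, is exact rather than a limit of approximations. The paper's identity $N(t)=\sum_i I_i(t)$ fails when a bin holds two arrivals, and its passage $\delta\to 0$ through an infinite sum of $o(\delta)$-corrected terms is not justified. Your argument needs neither step. It also gives strict stationarity and Poisson marginals for free, and it makes explicit the hypothesis $\EE{D_s}<\infty$, which the paper uses silently when it writes $\EE{N(t)}=\lambda\int_0^\infty[1-F(s)]\,ds$. In exchange, the paper's route needs only independent increments and i.i.d. service times, with no point-process machinery. For the regular-variation part, the paper simply asserts the Karamata asymptotic, while your Potter-bound and dominated-convergence sketch actually proves it. To close the long-range-dependence step, note that $h^{\epsilon}L(h)\to\infty$ for every $\epsilon>0$; choosing $\epsilon<2-\alpha$ then shows that $L(h)h^{-(\alpha-1)}$ is not integrable. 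One cosmetic slip: if $|\cdot|$ already denotes the intensity measure, then $\lambda\,|A_t\cap A_{t+h}|$ counts $\lambda$ twice, although your next line computes the correct quantity.
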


\begin{proof}
    See Appendix~\ref{app:LRD} for the complete proof. Here, we provide a sketch: our derivation begins by establishing the autocovariance function of the number of customers in an M/G/$\infty$ queue. We prove that $N(t)$ is wide-sense stationary and for heavy-tailed $F(s)$, the autocovariance function decays slower than $(\tau-t)^2$, leading to a long-range dependent attack surface.
\end{proof}

LRD implies that the impact of vulnerabilities persists over long time scales, and the system does not quickly forget past events. As a result, average-based or short-term observations may fail to accurately capture the true risk level of the system.
Even when the underlying system is the same, repeated observations of the process may show very different behavior over time, due to the persistent effects of past events.
This highlights the need for consistent and proactive mitigation strategies. For example, maintaining regular vulnerability discovery and patching cycles can help reduce long-tail effects on the defense side, thereby mitigating the impact of LRD in practice.

\section{A Near-Optimal RL Algorithm for Adaptive Defense}
\label{sec:RL}

Building on the model (Sec.~\ref{sec:model}) and the analytical insights into static allocation, temporal dependence, and AI-driven dynamics (Secs.~\ref{sec:static} and~\ref{sec:arvo}), we now build a systematic approach to the dynamic defense problem. To that end, we present a near-optimal RL algorithm for adaptively allocating constrained resources to dynamic defense, while accounting for switching costs. 
This learning-based approach is needed because the arrival and service processes governing the attack surface are unknown and may change over time.
The RL agent represents an adaptive defender who episodically reallocates limited patching or monitoring resources across a dynamic vulnerability queue. The transition dynamics of the attack surface process are \emph{unknown}. The policy switch for each episode corresponds to an operational reconfiguration (e.g., retuning patching pipelines or reassigning response teams), hence incurring measurable overhead modeled as a switching cost.

\subsection{Problem Setting and RL Framework}
\label{sec:learning}

Specifically, this section provides the solution to the constrained optimization problem stated in~(\ref{eq:rl_obj}), where the defender seeks the optimal dynamic defense policy under uncertainty.

\textbf{\textit{Setting:}} 
We apply the episodic Markov decision process (MDP) to model the dynamic defending problem. As assumed in standard episodic MDPs, we consider $H$ steps of interaction between the defender and the attack surface in each episode\footnote{Compared to the fixed allocation in traditional settings, this is a finer-grained setting where the defense action is taken in a more dynamic way.} $t=1,\ldots,T$. At each step $h=1,...,H$, based on the observed state $N^h(t)$ of the system, the defender can take a defense action $\mu_d^h(t)$ according to a policy $\pi_t: \gN \rightarrow \bm{\hat{\mu}}_d$, where $\gN = \{0, 1,\ldots,N\}$ is the system state space and $\bm{\hat{\mu}}_d = \{\mu_d\}$ is the defense action space. The defense must satisfy the resource-budget constraint $\mu_d^{h}(t) \leq b$ for all $h$ and $t$. After the defense action, the state evolves according to Eq.~(\ref{eq:evolution}).
The episodic formulation reflects practical operation cycles, such as periodic defense planning or monitoring windows, during which defense rates are adjusted based on observed backlog.

Given any state $N^h$ at step $h$, the defense action given by the current policy $\pi_t(\cdot)$ could be different from that given by the policy $\pi_{t-1}(\cdot)$ in last episode, in which case there will be a switching cost 
\begin{align}
\gS^h(t,N^h) \triangleq w \cdot \left | \pi_t(N^h) - \pi_{t-1}(N^h)\right | \nonumber
\end{align}
penalizing the change in defense across episodes.
The switching cost represents the operational overhead of changing defense actions between consecutive time steps.
In practice, this may correspond to reconfiguration effort, system adjustment, or resource reallocation required to implement a new defense policy.
The parameter $w$ controls the relative importance of the switching cost, and can be interpreted as a scaling factor that reflects how costly it is to change defense actions.
The magnitude of the switching cost is determined by the difference between consecutive defense actions, capturing how much the policy changes rather than simply whether a change occurs.

Therefore, the goal is to find a desirable algorithm $\pi$ that optimizes the expected cumulative penalty and defense cost over all steps and time-slots by executing the policies $\pi_{1:T}$, i.e., $\min_{\pi_{1:T}} \sum_{t=1}^{T} \left[ \gV^{\pi_t} + \gS^{\pi_t} \right]$. Here, the $\gV$-value function is defined to be
\begin{align}
\gV^{\pi_t} \triangleq \E \left[ \sum\nolimits_{h=1}^{H} \left[ C\left( N^h(t) \right) + \mu_d^{h}(t) \right] \right], \nonumber
\end{align}
where the expectation is taken with respect to the randomness of the state transition (\ref{eq:evolution}) and the race condition, and with a slight abuse of notation, the total switching cost is defined to be
\begin{align}
\gS^{\pi_t} \triangleq \E \left[ \sum\nolimits_{h=1}^{H} \sum\nolimits_{N^h \in \gN} \gS^{h}(t,N^h) \right]. \label{eq:defineswitching}
\end{align}
\textbf{\textit{Performance Metric:}} We use the standard regret as the metric to evaluate the performance of RL algorithm $\pi$, which is defined to be 
\begin{align}
\regret(T) \triangleq \sum\nolimits_{t=1}^T \left[\gV^{\pi_t} + \gS^{\pi_t} - \gV^* \right], \label{eq:defineregret}
\end{align}
i.e., the difference between the expected cumulative cost of the RL algorithm $\pi$ and the expected cumulative cost $\gV^*$ of the optimal policy $\pi^* = \argmin_{\{\pi:\pi(N^h) \leq b, \forall h, N^h\}} \gV^{\pi}$. Note that the optimal policy knows all problem parameters and does not change the policy. Thus, there is no switching cost and we drop the round index $t$. 
Intuitively, this regret measures the cumulative excess vulnerability exposure incurred by the learning defender relative to an omniscient optimal defense strategy.

\textbf{\textit{Novelties and Challenges:}} To our knowledge, this work is the first to analyze RL with switching costs that quantify the \emph{magnitude} of policy change rather than merely the frequency of switching. Specifically, the term $\gS^h(t,N^h)$ captures the absolute difference between consecutive defense actions, measuring how much the policy changes over time. In contrast, existing RL formulations penalize only whether $\pi_t$ differs from $\pi_{t-1}$, without accounting for the extent of change~\cite{bai2019provably,huang2022towards,shi2023nearswitch}. Moreover, we address the new challenge arising from the simultaneous presence of switching costs and resource-budget constraints, whose coexistence has not been studied in prior RL literature.

\subsection{Algorithm Design}
\label{subsec:algo_design}

Our algorithm maintains two Q-value estimates. One is updated continuously to learn from new data, while the other is updated less frequently to determine defense actions and limit switching costs, which is detailed in Algorithm~\ref{alg:pre1alg1}.
The algorithm outlines the core RL update under delayed policy switching. The algorithm maintains an optimistic Q-estimate, updated periodically according to a geometrically increasing triggering sequence to balance responsiveness and stability. From a high-level point of view, we take the defense action according to an optimistic belief-$\gQ$-value function. Specifically, at each step, our algorithm first updates an estimate-$\tilde{\gQ}$-value function, which represents the value of taking a certain action at a state (line 9). An action with larger estimate-$\tilde{\gQ}$-value function output is preferred. Intuitively, the belief-$\gQ$-value function should be updated more frequently when the sample size is small (i.e., the uncertainty is large), and it should be updated less and less frequently when the sample size becomes larger and larger. To achieve this, a delayed belief-$\gQ$-value function is updated when it has not been updated sufficiently long and triggers a switching threshold. Hence, to achieve effective defense with low switching costs, we need to carefully construct an effective belief-$\gQ$-value function to guide the defense action and construct an elegant triggering time sequence $\{t_n\}_{n\geq 1}$ for updating the belief-$\gQ$-value function (lines 11-12), as well as for changing defense actions.
Operationally, this design avoids frequent reconfiguration while still allowing rapid adaptation when uncertainty is high.

\begin{algorithm}[t]
\caption{Dynamic Defense Under Resource Constraints and Switching Costs}\label{alg:pre1alg1}
\begin{algorithmic}[1]
\STATE \textbf{Parameters:} $\eta = \frac{1}{2 H(H+1)}$ and $c>0$
\STATE \textbf{Initialization:} $\gQ$-value functions $\tilde{\gQ}^h(N,\mu_d) = H$ and $\gQ^h(N,\mu_d) = \tilde{\gQ}^h(N,\mu_d)$, state-action visitation counts $\gN^h(N,\mu_d) = 0$, where $N$ represents the number of vulnerabilities in the queue, and $\mu_d$ represents the defense action
\FOR{$t=1:T$}
\FOR{$h=1:H$}
\STATE Take defense action 

\hspace{0.5in} $\mu_d^h(t) = \argmin_{\{\mu_d\}} \gQ^h(N^h(t),\mu_d)$ 
\STATE Based on the arrivals of vulnerabilities and race condition, the queue state evolves to $N^{h+1}(t)$ according to Eq.~(\ref{eq:evolution})
\STATE Update the defense changing parameter 

\hspace{1.2in} $k = \gN^h(N^h(t),\mu_d^h(t))+1$
\STATE Update bonus $\mathcal{B}(k) = c\sqrt{H^3 /k}$ for defense exploration
\STATE Update the estimate-$\tilde{\gQ}$-value function according to Eq.~(\ref{eq:updateQvalueinalgorithm}).
\STATE Update the estimate-$\tilde{\gV}$-value function 

$\tilde{\gV}^h(N^h(t)) = \min\left\{ H, \max_{\{\mu_d\}} \tilde{\gQ}^h(N^h(t),\mu_d) \right\}$
\IF{$t \in \{t_n\}_{n\geq 1}$}
\STATE Update the belief-$\gQ$-value 

\hspace{1.0in} $\gQ^h(N^h(t),\cdot) = \tilde{\gQ}^h(N^h(t),\cdot)$
\ENDIF
\ENDFOR
\ENDFOR 
\end{algorithmic}
\end{algorithm}

Specifically, in Line 7 of Algorithm~\ref{alg:pre1alg1}, we first update the number of times the state $N^h(t)$ and defense action $\mu_d^h(t)$ are visited simultaneously, which will generate the bonus term in Line 8. This bonus term essentially captures the level of uncertainty after collecting $k$ samples, such that according to the concentration inequality (e.g., Hoeffding's inequality), the estimate-$\tilde{\gQ}$-value function is an optimistic estimate of the optimal true $\gQ$-value with high probability. To guarantee this, we update the estimate-$\tilde{\gQ}$-value function as follows,
\begin{align}
& \tilde{\gQ}^h(N^h(t),\mu_d^h(t)) = (1-\alpha(t))\tilde{\gQ}^h(N^h(t),\mu_d^h(t)) \nonumber \\
&\hspace{0.3in} + \alpha(t) \left[(\bar{C}+b -C(N^h(t))-\mu_d^h(t))/(\bar{C}+b) \right. \nonumber \\
&\hspace{1in} \left. + \tilde{\gV}^{h+1}(N^{h+1}(t))+\mathcal{B}(k)\right], \label{eq:updateQvalueinalgorithm}
\end{align}
where $\bar{C} = \sup \{C(\cdot)\}$. This estimate-$\tilde{\gQ}$-value is an weighted average between the old estimate-$\tilde{\gQ}$-value $\tilde{\gQ}^h(N^h(t),\mu_d^h(t))$ (for exploiting the knowledge learned from historical samples) and the newly learned knowledge from currently visited state-action pair
\begin{align}
&\left[ ( \bar{C}+b-(C(N^h(t))+\mu_d^h(t)) ) / (\bar{C}+b) \right] \nonumber \\
&\hspace{1.0in} + \tilde{\gV}^{h+1}(N^{h+1}(t)),
\end{align}
together with a bonus term $\mathcal{B}(k)$ (for encouraging exploration of potentially better defense strategies).

Finally, lines 11 and 12 determine whether or not to change the belief-$\gQ$-value function that will directly determine the defense action $\mu_d(t)$. Let $\tau(i) = \left\lceil(1+\epsilon)^i\right\rceil$ for $i=1,2,...$, and define the triggering time sequence as 
\begin{align}\label{eq:triggertime}
\left\{t_n\right\}_{n \geq 1}=\left[1, \tau\left(i_0\right)\right] \cup\left\{\tau\left(i_0+1\right), \tau\left(i_0+2\right), \ldots\right\},
\end{align}
where $\epsilon$ and $i_0 = \left\lceil\frac{\log \left(10 H^2\right)}{\log (1+\eta)}\right\rceil$ are hyper--parameters chosen by the algorithm. For all $t \in\{1,2, \ldots\}$, $\tau_{\text {last }}(t):=\max \left\{t_n: t_n \leq t\right\}$ and $\alpha(t)=\frac{H+1}{H+t}$. The triggering time sequence (\ref{eq:triggertime}) allows policy switch every time-slot at the beginning, and then the delay for policy switch keeps exponentially-increasing after a certain amount $\tau(i_0)$ of samples has been collected. For example, the policy switches as follows. Given state $N^h(t)$ at step $h$, we take some particular defense $\mu_d^h(t)$ for time $t$, and update both the estimate-$\tilde{\gQ}$-value and the belief-$\gQ$-value immediately. After the time-slot $\tau(i_0)$, we still update $\tilde{\gQ}$ immediately. However, we only update $\gQ$ when $t$ is in the triggering time sequence. This exponentially delayed update schedule enables high responsiveness early on and stability as uncertainty decreases, effectively balancing adaptation and switching cost.

\subsection{Theoretical Regret Bound: Near-Optimality}

We show that the proposed algorithm achieves near-optimal regret with high probability.
In particular, with high probability $1-p$, the theoretical regret of our algorithm is upper-bounded by $\tilde{O}(\sqrt{T})$, which is optimal. Recall that the regret is defined to compare our RL performance with the optimal policy, which is an oracle defender with full knowledge of system parameters and no switching penalty.

\begin{thm}
\label{thm:regret}
\textbf{(Regret Upper-Bound)} With high probability $1-p$, $p \in (0,1)$, the regret of Algorithm~\ref{alg:pre1alg1} is upper-bounded by $\tilde{O}\left(\sqrt{H^3 \bar{C}^4 b T}\right)$ for any horizon $T = \tilde{\Omega}\left(H^6 \bar{C}^2 b^2\right)$.
\end{thm}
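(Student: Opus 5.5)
We split the regret (\ref{eq:defineregret}) into a learning term and a switching term,
\[
\regret(T)=\sum_{t=1}^T\bigl(\gV^{\pi_t}-\gV^*\bigr)+\sum_{t=1}^T\gS^{\pi_t},
\]
and bound each separately on a single high-probability event of probability at least $1-p$. Throughout we work with the normalized per-step reward $r=(\bar C+b-C(N)-\mu_d)/(\bar C+b)\in[0,1]$ used in (\ref{eq:updateQvalueinalgorithm}). Minimizing cost is equivalent to maximizing $r$, so regret in $r$ converts back to regret in cost after multiplying by $(\bar C+b)$. This conversion factor, together with the $b$-dependence of the action range, is where the $\bar C$ and $b$ factors in the statement will enter.

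For the learning term we follow the optimistic Q-learning analysis with learning rate $\alpha(t)=\frac{H+1}{H+t}$ and Hoeffding bonus $\mathcal{B}(k)=c\sqrt{H^3/k}$, adapted to delayed policy updates. The steps are as follows.
\begin{enumerate}
\item Write $\tilde{\gQ}^h_t-\gQ^{*,h}$ recursively as a weighted sum with weights $\alpha_k^i=\alpha_i\prod_{j>i}(1-\alpha_j)$. Use the standard facts $\sum_i\alpha_k^i=1$, $\sum_i(\alpha_k^i)^2\le 2H/k$ and $\sum_{k\ge i}\alpha_k^i\le 1+1/H$.
\item Apply Azuma--Hoeffding with a union bound over states, actions and steps. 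This shows that $\tilde{\gQ}$ is optimistic, $\tilde{\gQ}\ge\gQ^*$, with probability $1-p$.
\item Because the executed policy is greedy with respect to the stale belief $\gQ^h=\tilde{\gQ}^h_{\tau_{\rm last}(t)}$ rather than the current $\tilde{\gQ}^h_t$, the per-episode gap acquires an extra term $\gQ^h-\tilde{\gQ}^h_t$.
\item Control this extra term using the geometric trigger schedule (\ref{eq:triggertime}). For $t\le\tau(i_0)$ the belief is updated every episode. Afterwards $t/\tau_{\rm last}(t)\le 1+\epsilon$, so between triggers the estimate moves by at most a factor $(1+\eta)$-type amount. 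The choice $\eta=\frac{1}{2H(H+1)}$ makes the compounded inflation over $H$ steps at most $(1+1/H)^H\le e$, and $i_0$ is chosen so that this inflation holds.
\item Unroll the recursion over $h$ and sum the bonuses with the pigeonhole bound $\sum_k\sqrt{H^3/k}\lesssim\sqrt{H^3\cdot|\gN||\hat{\bm\mu}_d|\,T}$. The action-set size scales with $b$, giving a normalized regret of $\tilde O(\sqrt{H^3 bT})$ times polynomial factors. The burn-in cost of order $H\tau(i_0)$ is dominated once $T=\tilde\Omega(H^6\bar C^2b^2)$, which is exactly how the horizon condition arises.
\end{enumerate}

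For the switching term, note that the belief $\gQ$, and hence $\pi_t$, changes only at trigger times. After $\tau(i_0)$ the number of triggers up to $T$ is $O(\log T/\log(1+\epsilon))$. Each switch costs at most $w\cdot b$ per state per step, so at most $wbH|\gN|$ in total. Before $\tau(i_0)$ switches can occur every episode, but $\tau(i_0)=\mathrm{poly}(H)$ does not depend on $T$. The switching term is therefore $\tilde O(wbH|\gN|\,\mathrm{poly}(H))$, which is absorbed into the $\sqrt T$ term for $T$ above the stated threshold.

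The main obstacle is the delayed-update step: bounding $\gQ^h_{\tau_{\rm last}(t)}-\tilde{\gQ}^h_t$ uniformly while keeping optimism, since the stale belief is optimistic only relative to an older estimate. We would first try to show that the count-based weights change by at most a $(1+\epsilon)$ factor between triggers, and then carry a multiplicative $(1+1/H)$ slack through the backward induction, as in rare-switching Q-learning analyses. The final step is to convert back to cost units by multiplying by $(\bar C+b)$ and to collect the $\bar C$ powers arising from the value range, which yields $\tilde O(\sqrt{H^3\bar C^4 bT})$.
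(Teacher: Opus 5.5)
Your outline has the same architecture as the paper's proof. Both use optimism from the Hoeffding bonus and split the per-step gap into the usual estimation error plus a stale-belief term. Both use a geometric trigger schedule whose constants $\eta=\frac{1}{2H(H+1)}$ and $i_0$ come from rare-switching Q-learning \cite{bai2019provably}. Both finish with a backward recursion carrying a $1+O(1/H)$ factor and a logarithmic count of switches.

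\textbf{The delayed-update step.} This is the step that fails as you set it up. Optimism of the stale belief is not the difficulty: on the uniform Azuma--Hoeffding event every earlier $\tilde{\gQ}$ is already optimistic. The difficulty is the size of the overestimate. You use $t/\tau_{\mathrm{last}}(t)\le 1+\epsilon$ for the episode index, but a pair's weights and bonus depend on its own visit count $k=\gN^h(N,\mu_d)$. The episode ratio does not control that count ratio.
\begin{itemize}
\item Between two global triggers there are roughly $\epsilon\tau(i)$ episodes. A pair whose count was $k_{\mathrm{old}}$ at its last refresh can be visited up to $\epsilon\tau(i)\gg k_{\mathrm{old}}$ times in that window.
\item Algorithm~\ref{alg:pre1alg1} refreshes the belief only at the state visited in a trigger episode, so a state can stay stale across many windows.
\item The analysis can then only charge each such visit the stale overestimate, of order $\sqrt{H^3/k_{\mathrm{old}}}$. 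Summed over the window, this gives a bound linear in the window length, not $\sqrt{T}$.
\end{itemize}
The fix is to run the trigger on each pair's own counter $k$, as in the analysis the constants are taken from. Then $k/\tau_{\mathrm{last}}(k)\le 1+\epsilon$ holds by construction. The lemma you are still missing is the aggregated delayed-learning-rate bound in the paper's Step~4, which gives total weight at most $1+3/H$ per sample. That bound is what turns the $(1+\epsilon)$ count ratio into the $1+O(1/H)$ factor in the recursion. It also needs $\epsilon=\eta$, as the paper sets; otherwise $\tau(i_0)$ is not polynomial in $H$. With per-pair triggers the number of switches becomes $O(H^3|\gN|\,|\text{actions}|\log T)$, which is still logarithmic in $T$.

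\textbf{The final bookkeeping.} Your last step does not produce the stated form.
\begin{itemize}
\item Your pigeonhole sum is per step $h$. Adding it over $h=1,\dots,H$ gives $\tilde O\big(\sqrt{H^5|\gN|\,|\text{actions}|\,T}\big)$ in normalized units when $T$ counts episodes, which is the standard Hoeffding-bonus rate.
\item The factor $|\gN|$ has no counterpart in $\sqrt{H^3\bar C^4 bT}$.
\item With $\epsilon=\eta$ one has $\tau(i_0)\approx 10H^2$, so the burn-in costs about $10H^3(\bar C+b)$. That is already dominated by $\sqrt{H^3\bar C^4 bT}$ once $T\gtrsim H^3(\bar C+b)^2/(\bar C^4 b)$. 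So the burn-in is not ``exactly how'' the threshold $H^6\bar C^2b^2$ arises.
\end{itemize}
The paper's Step~5 asserts the $\bar C^4 b$ scaling and the horizon condition just as tersely, so you are not behind the source here. Still, you should not present these exponents as consequences of your steps.
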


\begin{proof}
See Appendix~\ref{app:pftheoremregret} for the complete proof. Here, we provide a sketch. The proof follows optimism-based analysis for episodic RL, with new developments to handle the delayed defense switching and resource budgets. Let $\tilde{\gQ}^h(t)$ denote the \emph{estimate-$\tilde{\gQ}$-value function}, which is continuously updated from new samples, and let $\gQ^h(t)$ denote the \emph{belief-Q-value function}, a stabilized version used for policy decisions and updated only at triggering times $\{t_n\}_{n\ge1}$. The proof involves the following key ideas.

\emph{(i) Optimism and Concentration:}
Each $\tilde{\gQ}^h(t)$ update uses a step size $\alpha(t)=\frac{H+1}{H+t}$ and an exploration bonus 
$\gB(k)=c\sqrt{H^3\log(1/p')/k}$. 
By standard concentration arguments (e.g., Azuma–Hoeffding inequality), with probability at least $1-p'$, the estimate satisfies
\[
0 \le \tilde{\gQ}^h(t) - \gQ^{*,h} 
\le \gB(k) = \tilde{O}\!\left(\sqrt{\tfrac{H^3}{k}}\right),
\]
uniformly over all $(h,N,\mu,t)$, where $\gQ^{*,h}$ is the optimal $\gQ$ value. This ensures \emph{optimism}: the learned Q-values upper bound the true optimal values within $\gB(k)$.

\emph{(ii) Regret Decomposition:}
Let $\delta^{h}(t) = \tilde V^{h}(t) - V^{\pi_t,h}$ denote the instantaneous regret at step $h$ in episode $t$. 
Since the policy $\pi_t$ is derived from the most recently updated $\tilde{\gQ}^h(k')$ at trigger $k' \!\le\! t$, we decompose
\[
\delta^{h}(t)
\le | (\gQ^{h}(k')-\gQ^{\pi_t,h}) | + | \tilde{\gQ}^{h}(k')-\gQ^{h}(k') | ,
\]
where the first term behaves as in standard optimistic RL, while the second term measures the deviation caused by delayed updates.

\emph{(iii) Controlling the Delay via Triggering Sequence:}
Between two triggers, $\tilde{\gQ}$ evolves according to small step sizes and bounded bonuses.
Under the geometrically increasing triggering schedule $t_n=\lceil(1+\epsilon)^n\rceil$, 
the cumulative deviation $\sum_t | \tilde{\gQ}^h(k')-\gQ(^hk') |$ grows at most by a constant factor $1+O(1/H)$ relative to non-delayed updates.
Hence, the delay introduces only a multiplicative $O(1/H)$ overhead.

\emph{(iv) Error Propagation over the Horizon:}
Summing the per-step inequalities and propagating value errors through the horizon yields
\[
R^h \;\le\; (1+O(1/H))R^{h+1} + \tilde{O}(\sqrt{H^3 T}) ,
\]
where $R^h=\sum_t\delta^{h}(t)$. 
Unrolling across $h=1,\dots,H$ gives
$\sum_h R^h = \tilde{O}(\sqrt{H^3 T})$. 
Including bounded per-step costs $\bar{C}$ and feasible budget $b$ scales the bound to 
$\tilde{O}(\sqrt{H^3 \bar{C}^4 b T})$.

\emph{(v) Switching Costs:}
The number of belief-$\gQ$-value updates, and hence policy changes, is logarithmic in time. Thus, the cumulative switching cost contributes at most $\tilde{O}(\log T)$ to regret, absorbed by the main term.

Combining the above, the total regret then follows.

\end{proof}

\vspace{-0.3cm}

To the best of our knowledge, this is the first regret bound established for dynamic defense 
under the coexistence of resource-budget constraints and switching costs for amount of changes.

\section{Numerical Evaluation}
\label{sec:sim}

We now evaluate the proposed framework through a series of numerical experiments that combine model-based simulations and data-driven evaluations. 
The goal of this section is twofold: first, to verify that the analytical insights derived from the queueing model are reflected in controlled settings, and second, to demonstrate that the proposed RL-based defense policy remains effective under realistic vulnerability dynamics.
In particular, we focus on how adaptive defense allocation influences the evolution of the attack surface, measured by the queue length $N(t)$, and the resulting exploit rates.
The queue length represents the number of unresolved vulnerabilities at a given time and therefore directly captures the system's exposure to attacks.
Reducing both its average level and its high-percentile tail is critical for limiting sustained cyber risk.
The analysis proceeds in three parts: (i) controlled model-based simulations to verify core dynamics, (ii) trace-driven evaluation using the ARVO dataset, and (iii) aggregate-budget simulations that examine RL reallocation under realistic resource constraints.

\subsection{Model-based RL evaluation}
\label{sec:synthetic}
We begin with controlled simulations based on the analytical model introduced in Section~\ref{sec:problem}. 
These simulations evaluate the RL defense policy in a simplified environment where all system parameters are known and the dynamics are fully specified. 
The objective here is not to model a realistic system, but to isolate the effect of adaptive control and understand how the RL policy behaves under different arrival patterns.
In particular, these experiments highlight two main effects: (i) the RL policy’s ability to reduce successful exploits compared to fixed allocations, and (ii) its ability to stabilize system behavior under nonstationary vulnerability arrivals.

\begin{figure}[t]
  \centering
  \includegraphics[width=0.65\linewidth]{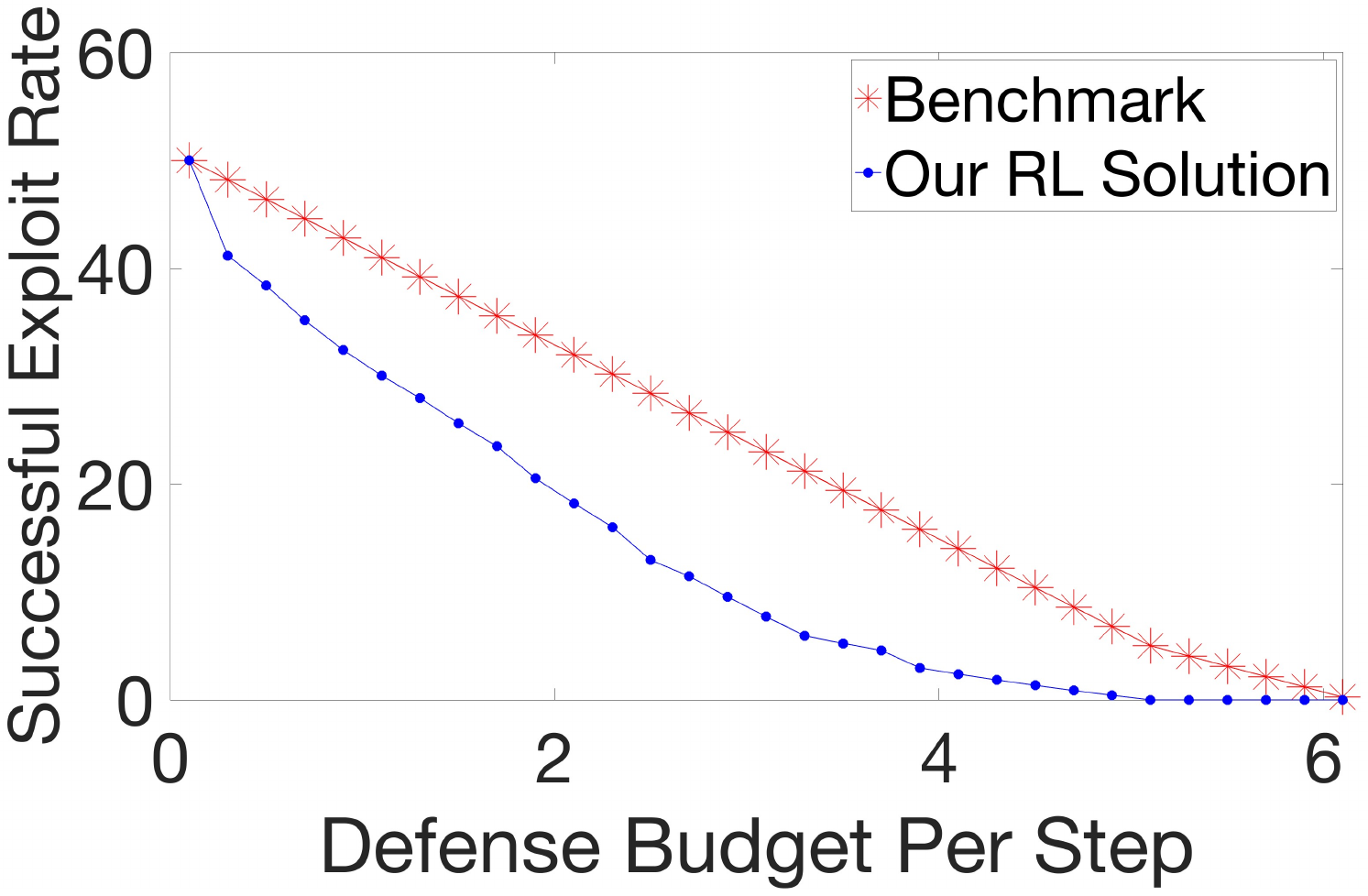}
  \caption{Successful exploit rate vs.\ per-step defense budget (patches per unit time) under stochastic arrivals.}
  \label{fig:attack1}
\end{figure}

\begin{figure}[t]
  \centering
  \includegraphics[width=0.65\linewidth]{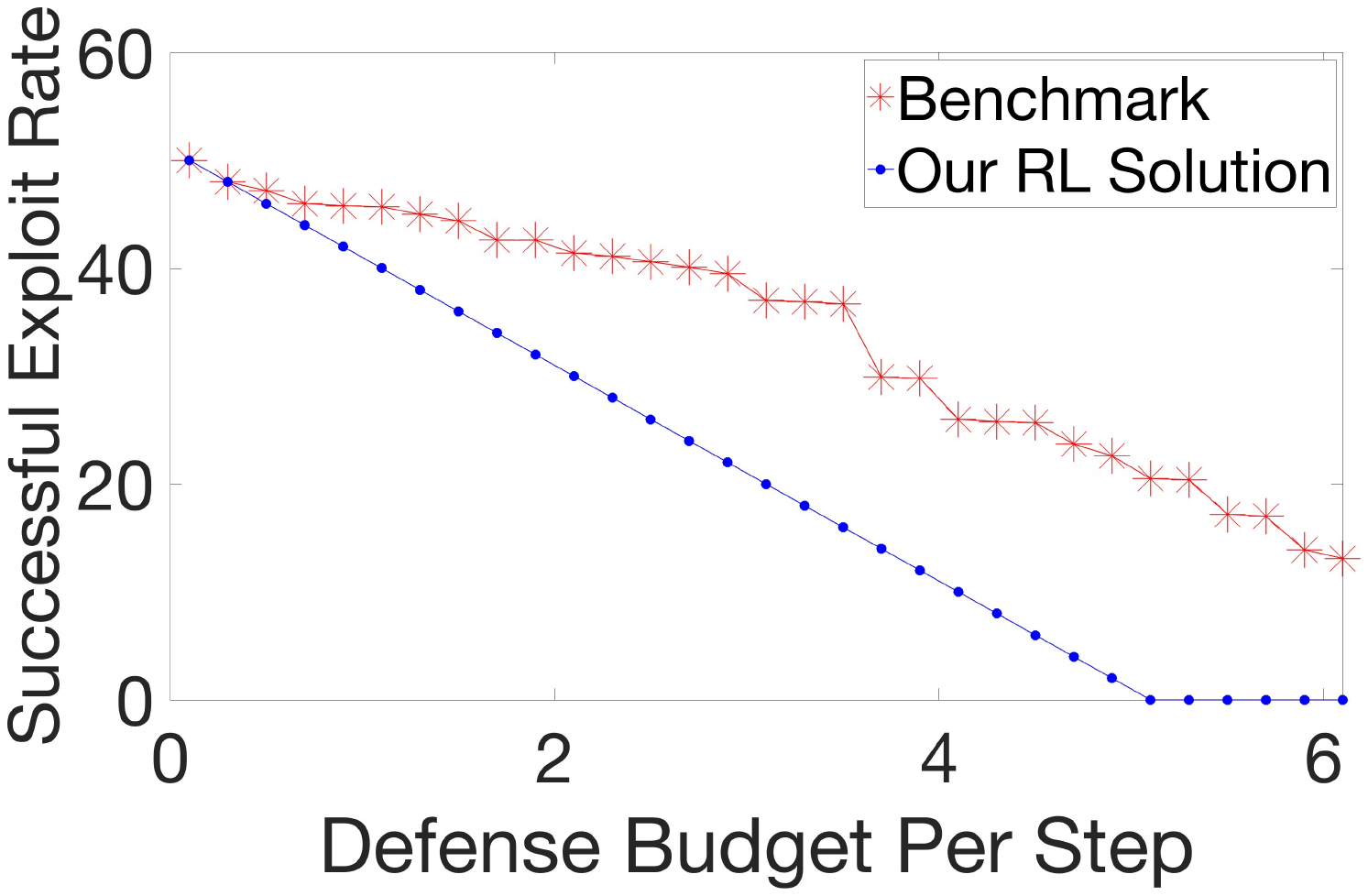}
  \caption{Successful exploit rate vs.\ per-step defense budget (patches per unit time) under adversarial arrivals.}
  \label{fig:attack2}
\end{figure}

\textbf{\textit{Setup:}} We consider $T=10^4$ rounds, each with $H=10$ steps, and use the model $\mu_\ell(t)=3N(t)$ for exploitation rate in these runs. Vulnerability arrivals are drawn under two regimes: (i) \emph{stochastic} arrivals with rate $\lambda=5$ at every step, and (ii) \emph{adversarial} arrivals that vary arbitrarily in $[0,10]$. We compare a fixed defending policy (constant per-step $\mu_d$) to the learned policy produced by our RL procedure, sweeping the per-step defense budget $b$ (measured in patches per unit time) and plotting the resulting successful exploit rates.

\textbf{\textit{Results:}} Figure~\ref{fig:attack1} and~\ref{fig:attack2} show successful exploit rate versus defense budget for the two arrival regimes. The learned policy reduces successful exploits substantially across budgets and attains up to a $55\%$ reduction for certain budget points in the stochastic regime. 
In the adversarial regime, the learned policy both reduces the mean exploit rate and smooths high-frequency fluctuations compared to the fixed allocation, reducing both the mean exploit rate and its variability compared to the fixed allocation. The reduction in variability improving predictability of performance, thereby enables an organization to better plan their budget to achieve a specific goal against the attacks.

\subsection{Trace-driven RL evaluation}
\label{sec:exp_arvo}

We now evaluate the RL defense policy using a trace-driven simulator built from the ARVO dataset, which provides event-level records of vulnerability discovery and patching in real-world open-source software projects.
At each time step, vulnerabilities arrive according to the observed ARVO trace, and the empirical defended counts define the baseline defense behavior. The RL agent observes the current queue length $N(t)$, which represents the number of active (unpatched) vulnerabilities, and selects a defense-rate action.
This action corresponds to how much defensive effort is allocated at that time step.
The defense rate $\mu_d(t)$ selected by the RL policy is interpreted as an expected number of patches per unit time.
In the simulator, this rate is mapped to an integer number of defended vulnerabilities by drawing from a Poisson distribution and truncating to a nonnegative integer.
This mapping captures the stochastic nature of patching processes while preserving the meaning of $\mu_d(t)$ as a controllable resource allocation variable.
The per-step defense budget $b$ represents the maximum number of patches that can be applied within a single time step, and therefore directly limits $\mu_d(t)$.
In contrast to static allocation, the RL policy dynamically adjusts $\mu_d(t)$ over time depending on the current system state, allowing it to prioritize defense when the attack surface grows large.
Importantly, the ARVO trace does not explicitly include attacker exploit events. Therefore, in this trace-driven setting, the queue evolution is governed by vulnerability arrivals and defense actions, and the queue length serves as the primary measure of system exposure.
A smaller queue length indicates fewer active vulnerabilities and thus reduced attack surface.
We compare the RL policy against the empirical baseline, which corresponds to the observed patching behavior in the ARVO dataset.

\begin{figure}[t]
    \centering
    \includegraphics[width=0.9\linewidth]{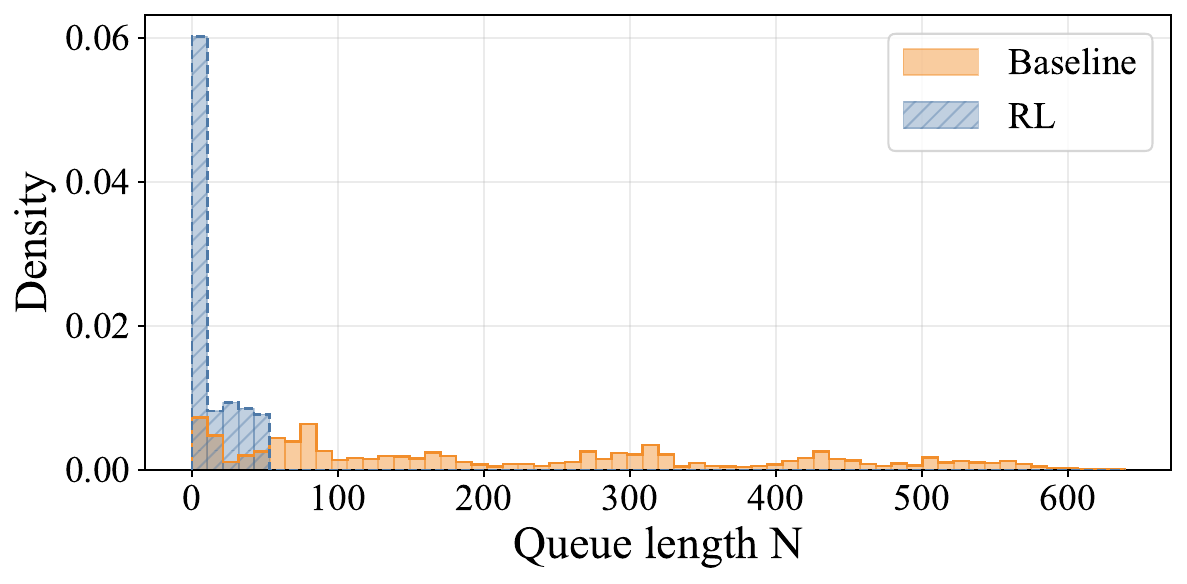}
    \caption{Trace-driven comparison of queue-length probability densities on the ARVO dataset (RL vs.\ baseline) for per-step budget $b=1.0$ patches per unit time.}
    \label{fig:arvo_hist}
\end{figure}

\begin{figure}[t]
    \centering
    \includegraphics[width=0.9\linewidth]{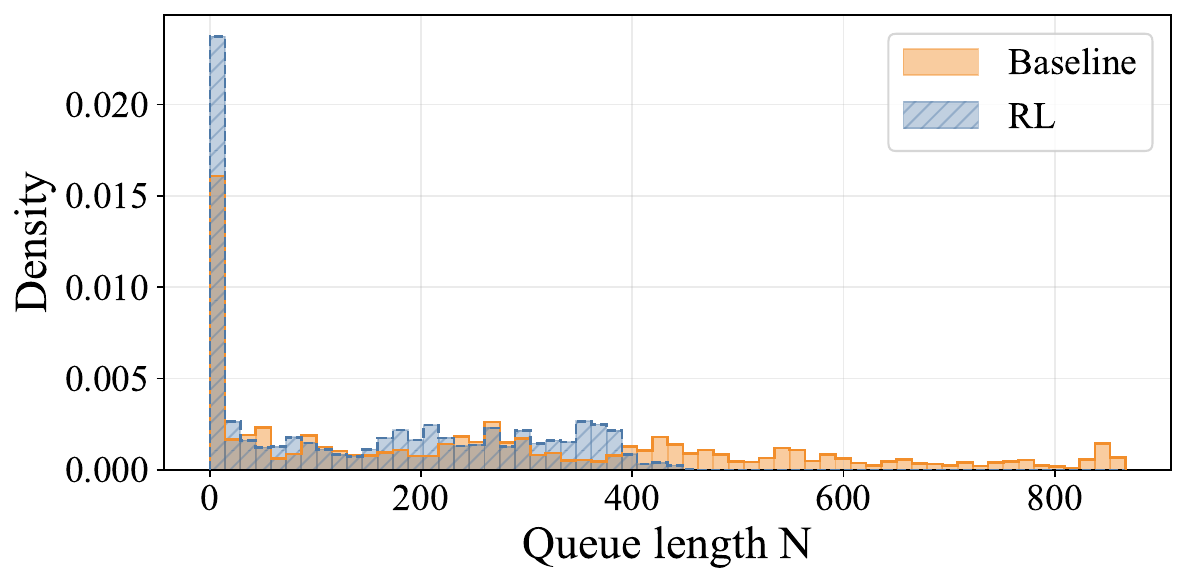}
    \caption{Queue-length histogram under the empirical aggregate baseline budget and the RL-reallocated policy. Statistics are reported in the text.}
        \label{fig:final_hist}
\end{figure}

\begin{table}[!t]
\centering
\caption{Trace-driven comparison statistics: queue-length mean \& variance (ARVO, RL vs.\ baseline).}
\label{tab:arvo_summary}
\begin{tabular}{lrr}
\toprule
Policy / Budget (patches per unit time) & Mean & Variance \\
\midrule
Baseline (data) & 219.9 & 30{,}930 \\
RL (b=0.5) & 59.4 & 1{,}602 \\
RL (b=1.0) & 13.0 & 219 \\
RL (b=1.5) & 4.7 & 56 \\
RL (b=2.0) & 1.2 & 6 \\
RL (b=2.5) & 0.1 & 0.4 \\
RL (b=3.0) & 0.1 & 0.3 \\
\bottomrule
\end{tabular}
\vspace{-0.5cm}
\end{table}

\textbf{\textit{Setup:}} We preprocess the ARVO data by binning records into 6-minute intervals, where each bin corresponds to one time step.
Ten consecutive bins form one episode (one hour), resulting in $T=64{,}395$ episodes.
For each bin, the number of reported vulnerabilities is used as the arrival count, and the empirical defended counts define the baseline defense events.
The RL agent observes only the queue length and selects actions from a discrete action set $\{0,0.5,1.0,1.5,2.0,2.5,3.0\}$.
Because the agent is table-based, we cap the indexed queue size at $N_{\max}=300$; larger values are recorded for evaluation but mapped to this index during decision making.
We evaluate multiple defense budgets $b\in\{0.5,1.0,1.5,2.0,2.5,3.0\}$, where $b$ represents the maximum patching rate per time step.
The RL algorithm includes an exploration bonus $\mathcal{B}(k)=c_{\mathrm{bonus}}\sqrt{H^3/k}$, and we use $c_{\mathrm{bonus}}=0.1$.
To ensure fair comparison, we tune the defense cost weight such that the total number of patches applied by the RL policy matches the empirical baseline on average.
All other algorithmic settings follow Section~\ref{subsec:algo_design}.

\textbf{\textit{Results:}} Figure~\ref{fig:arvo_hist} shows the queue-length density when per-step budget is $b=1.0$ patches per unit time.
For each budget $b$, we average results over 5 random seeds and report the queue-length mean and variance for the RL policy and the baseline in Table~\ref{tab:arvo_summary}.
The RL policy consistently reduces the queue length compared to the baseline, and the gains become larger as the per-step budget increases.
This result demonstrates the effectiveness of the RL policy in directly reducing the attack surface compared to the baseline under the same per-step budget.

\subsection{Aggregate-budget RL reallocation}
\label{sec:exp_integrated}

We next evaluate how much improvement can be obtained purely from better allocation of existing resources.
Specifically, we compare the baseline and RL policies when both operate under the same total defense budget over the entire time horizon.
The budget is obtained directly from the data-driven model in Section~\ref{sec:arvo}. 
Using the segmented analysis in Figure~\ref{fig:final_qld}, we estimate the baseline defense rate in each regime and sum these values to obtain the aggregate defense budget.
This represents the total amount of defense effort observed in the system.
The key difference is how this budget is used: the baseline applies it according to the observed (fixed) schedule, while the RL policy reallocates the same total budget across time steps depending on the system state. 
This setting isolates the benefit of adaptive allocation from the effect of increased resources.

\textbf{\textit{Setup:}} Following Figure~\ref{fig:final_qld}, we segment the ARVO trace into ten regimes and estimate the baseline per-segment defense rates.
In this subsection, we keep the ARVO arrivals unchanged and generate the baseline defended counts by drawing Poisson samples with the estimated per-segment rates.
We then sum the baseline's defense rates over the full period to obtain its aggregate defense budget.
The RL operates as in Section~\ref{sec:exp_arvo} but is constrained so that its total defense effort over the full period does not exceed this aggregate budget. This allows us to evaluate RL reallocation across steps under an equal total defense resource.

\textbf{\textit{Results (representative run):}} Figure~\ref{fig:final_hist} plots the queue-length densities for the baseline (with the estimated defense rates) and for RL (with aggregate-budget reallocation). 
RL substantially reduces large queue lengths compared to the baseline.
The summary statistics are $\mathrm{mean}(N)=146.63$, $N_{95}=379$, and $N_{99}=412.2$ for RL. $\mathrm{mean}(N)=267.52$, $N_{95}=772$, and $N_{99}=852$ for baseline.
Thus, RL reallocation substantially reduces the mean queue length and shrinks the high-percentile tails of the distribution.
In contrast to Figure~\ref{fig:arvo_hist}, where the per-step budget is fixed, this setting isolates the effect of resource allocation.
Both the baseline and RL operate under the same total defense budget, and the improvement is achieved purely by adapting how the budget is distributed over time.
This highlights that the performance gain is not due to increased resources, but due to more efficient allocation of the same resources.

\section{Discussion and Future Directions}
\label{sec:discussion}

We introduced a spatio-temporal queueing abstraction for the attack surface that models incoming vulnerabilities as arrivals and departures as either successful exploits or successful patches, and used this framework to derive several analytic insights. In particular, we highlight (i) a highly non-linear relationship between defense-resource shortfall and the rate of successful exploits, (ii) the emergence of long-range temporal dependence in the attack surface process when vulnerability lifetimes are heavy-tailed, and (iii) the fact that an aggregate AI-amplification of arrival and exploit rates can increase breach rates even when the attack surface distribution remains qualitatively similar.

While our analysis primarily focuses on a single-component system for clarity, the framework naturally extends to multi-component and multi-organizational settings. An organization’s total attack surface can be viewed as a collection of correlated queues, each representing a subsystem such as authentication, storage, or cloud services. At a larger scale, an ecosystem of organizations can be modeled as a network of statistically dependent queueing systems, capturing interdependencies arising from shared software libraries, third-party integrations, or supply-chain relationships. For tractability, we restrict our formal analysis to the single-queue case, which already exhibits the essential dynamics of heavy-tailed persistence, feedback coupling, and resource constraints that characterize real-world attack surface evolution.

Building on these foundations, natural directions for future work include extending the queueing abstraction to \textbf{multiple, dependent queues} that reflect component structure; modeling the ecosystem of multiple organizations (and their interactions) as an interconnected queueing system; exploring collaborative defense formulations (e.g., multi-agent approaches) under resource constraints; and expanding empirical data collection to strengthen and validate the modeling assumptions.

\section{Conclusion}

We developed a queueing-theoretic framework to model the cyber attack surface, accounting for both its temporal dynamics and spatial structure. Unlike models focused on isolated vulnerabilities, our system-level formulation shows how heavy-tailed patching times induce long-range dependence and persistent cyber risk. We also quantified how AI-driven capabilities shift the spatio-temporal dynamics of both attack and defense.

Using this core model, we formulated adaptive defense as a constrained sequential control problem and implemented an RL-based solution for dynamic resource allocation. This approach links data-driven system identification with control, allowing for decision-making under strict resource constraints.

We evaluate the proposed framework using both model-based simulations and trace-driven experiments. The results show clear improvements in the attack surface size under the same resource constraints. 
In the model-based experiments, the RL policy achieves up to 55\% reduction in exploit rates compared to fixed allocation. 
In the trace-driven setting, the average attack surface size is reduced by more than 90\%. 
Likewise, with fixed aggregate-budget, where both policies use the same total defense budget, this reduction is 45\%. Also, across all experiments, the tail (95-percentile) surface size is reduced by more than 50\%.
These results demonstrate that adaptive allocation of defensive resources facilitated by the novel queue model to the attack surface can significantly improve system performance without increasing the total budget.

\bibliographystyle{IEEEtran}
\bibliography{references}

\appendices
\section{Proof of Theorem~\ref{thm:LRD}}
\label{app:LRD}

Let us define the indicator variable:
\[ I_i(t) \stackrel{\triangle}{=} 
\left\{ \begin{array}{ll}
1, & \ {\text{a vulnerability arrives in $((i-1)\delta,i\delta)$}} \\ 
{} &\ \ \ {\text{ and still in the system at time $t$}} \\
0, & \ {\text{otherwise}}
\end{array} \right. .  \] 
for $\delta$ being the amount of temporal increment. Therefore, \vspace{-0.1in}
\begin{equation}
\label{noft}
N(t)=\sum_{i=-\infty}^{t/\delta} I_i(t) . \vspace{-0.1in}
\end{equation}
For any $t,\tau$ such that $t \leqslant \tau$, \vspace{-0.1in}
\begin{align}
\nonumber
&\EE{N(t)N(\tau)} = \EE{\sum_{i=-\infty}^{t/\delta}
\sum_{j=-\infty}^{\tau/\delta} I_i(t) I_j(\tau)} \\
\nonumber
&= \sum_{\{ i\leqslant t/\delta ,j \leqslant \tau/\delta \ |\ i\neq j \}}
\EE{I_i(t)I_j(\tau)} + \sum_{i=-\infty}^{t/\delta} \EE{I_i(t)I_i(\tau)} \\
\label{indarrivals1}
&= \sum_{\{ i\leqslant t/\delta ,j \leqslant \tau/\delta \ |\ i\neq j \}}
\EE{I_i(t)}\EE{I_j(\tau)} + \sum_{i=-\infty}^{t/\delta} \EE{I_i(t)I_i(\tau)} , \vspace{-0.2in}
\end{align}
where Eq. (\ref{indarrivals1}) follows since $I_i(t)$ and $I_j(\tau)$ are
independent for $i\neq j$. We can also write \vspace{-0.1in}
\begin{align}
\nonumber
\EE{N(t)}&\EE{N(\tau)} = \EE{\sum_{i=-\infty}^{t/\delta} I_i(t)}
\EE{\sum_{j=-\infty}^{\tau/\delta} I_j(\tau)} \\
\nonumber
&= \sum_{\{ i\leqslant t/\delta ,j \leqslant \tau/\delta \ |\ i\neq j \}} \EE{I_i(t)}\EE{I_j(\tau)} \\ 
\label{indarrivals2}
&\hspace{1in} + \sum_{i=-\infty}^{t/\delta} \EE{I_i(t)}\EE{I_i(\tau)} . \vspace{-0.1in}
\end{align}
Combining Eq.~(\ref{indarrivals1})~and~(\ref{indarrivals2}) we get, 
\begin{align}
\nonumber
&\cov{N(t),N(\tau)} = \EE{N(t)N(\tau)} - \EE{N(t)}\EE{N(\tau)} \\
\label{indarrivalmain1}
&\hspace{0.7in}= \sum_{i=-\infty}^{t/\delta} \left\{ \EE{I_i(t)I_i(\tau)}
- \EE{I_i(t)}\EE{I_i(\tau)} \right\} \\
\label{indarrivalmain2} &\hspace{0.7in} =
\sum_{i=-\infty}^{t/\delta} \left\{ \lambda \delta [1-F(\tau-i\delta)]
- o(\delta) \right\} , \vspace{-0.1in}
\end{align}
where the first term in (\ref{indarrivalmain2}) is due to $\EE{I_i(t)I_i(\tau)}$ and the o$(\delta)$ is due to $\EE{I_i(t)}\EE{I_i(\tau)}$ and that the probability of two vulnerabilities in the same instant is o$(\delta)$.
As $\delta \rightarrow 0^+$, 

\begin{align}
\cov{N(t),N(\tau)} & = \int_{-\infty}^{t} \lambda \left[ 1-F(\tau-s)\right] ds \notag \\
& = \lambda  \int_{| \tau - t |}^{\infty} \left[ 1-F(s)\right] ds.
\label{indarrivalmain3}
\end{align}

Also,
\[
\EE{N(t)}=\lambda\int_0^\infty [1-F(s)]\,ds,
\]
which is independent of $t$. Hence, the mean is constant. Since the covariance
in Eq. (\ref{indarrivalmain3}) depends only on the lag $|\tau-t|$, the process
$\{N(t)\}$ is wide-sense stationary.

Furthermore, we consider heavy-tailed service-time distributions with regularly
varying tail, i.e.,
\[
1-F(s)\sim L(s)s^{-\alpha}, \qquad 1<\alpha<2,
\]
where $L(\cdot)$ is slowly varying. Under this assumption,
\begin{align}
    \Cov(N(t),N(t+h))
    & =\lambda\int_h^\infty [1-F(s)]\,ds \notag \\
    & \sim \frac{\lambda}{\alpha-1}L(h)h^{-(\alpha-1)},
    \qquad h\to\infty.
\end{align}
Since $0<\alpha-1<1$, the autocovariance is non-integrable, and therefore
$\{N(t)\}$ exhibits long-range dependence.

\section{Proof of Theorem~\ref{thm:regret}}\label{app:pftheoremregret}

\begin{proof}

For readability, we remove some indices when the context is clear.
Recall that $\tilde{\mathcal Q}^{h}(t)$ denotes the \emph{estimate-$\gQ$-value function},
continuously updated from new samples at step $h$ and time $t$,
while $\mathcal Q^{h}(t)$ denotes the \emph{belief-$\gQ$-value function},
a stabilized version used to determine the defense action and updated
only at triggering times $\{t_n\}_{n\ge1}$.

\vspace{4pt}
\noindent\textbf{Step 1: Instantaneous regret decomposition:}
Let $\tilde{\delta}^{h}(t)$ be the instantaneous regret at step $h$ and time $t$.
Following the standard optimistic decomposition,
\begin{align}
\tilde{\delta}^{h}(t)
& = 
\Big(
\max\big\{\tilde{\mathcal Q}^{h}(t_\uparrow),\,
\tilde{\mathcal Q}^{h}(t)\big\}
- \mathcal Q^{*,h}
\Big)
\!\big(N^{h}(t),\mu_d^{h}(t)\big) \nonumber \\
& \leq 
|
\tilde{\mathcal Q}^{h}(t_\uparrow) - \mathcal Q^{*,h}
|
+
|
\tilde{\mathcal Q}^{h}(t) - \tilde{\mathcal Q}^{h}(t_\uparrow)
| ,
\label{eq:regret-decomp}
\end{align}
where $t_\uparrow=\tau_{\mathrm{last}}(t)+1$
denotes the most recent triggering time before $t$.
The first term corresponds to the standard value-estimation error,
and the second term represents the additional deviation introduced
by the delayed belief update.

\vspace{4pt}
\noindent\textbf{Step 2: Recursive relation for $\tilde{\mathcal Q}^{h}(t)$:}
The update of the estimate-$\tilde{gQ}$-value at visit $t$ can be expanded as
\begin{align}
& \tilde{\mathcal Q}^{h}_{t}(N,\mu_d)
- \mathcal Q^{\pi,h}(N,\mu_d) \nonumber \\
& =
\alpha(t_i)\!
\left(
H - \mathcal Q^{\pi,h}(N,\mu_d)
\right) + 
\sum_{i=1}^{k}
\alpha(t_i)
\Big[
C^{h}(N,\mu_d) \nonumber \\
&\qquad + \tilde V^{\,h+1}_{t_i}\!\big(N^{h+1}_{d}(t_i)\big)
- V^{\pi,h+1}\!\big(N^{h+1}(t_i)\big) \nonumber \\
&\qquad
+\,
\big(\widehat P^{h}_{t_i}-P^{h}\big)V^{\pi,h+1}(N,\mu_d)
+ \gB(t_i)
\Big],
\label{eq:q-update-expansion}
\end{align}
where $\alpha(t_i)$ is the step size,
$B(t_i)$ is the exploration bonus,
and $\widehat P^{h}_{t_i}$ is the empirical transition model.
This expression separates the stochastic update noise,
transition deviation, and optimism term.

\vspace{4pt}
\noindent\textbf{Step 3: Bounding the delayed perturbation:}
For any $t>t_\uparrow$, the cumulative drift between
two consecutive triggers satisfies
\begin{align}
& |
\tilde{\mathcal Q}^{h}(t)
- \tilde{\mathcal Q}^{h}(t_\uparrow)
|\!
\big(N^{h}(t),\mu_d^{h}(t)\big) \nonumber \\
& \le
\phi_k
+
\sum_{i=\tau_{\mathrm{last}}(t)+1}^{t}
\alpha(t)\,
\tilde\zeta^{\,h+1}_{t_i}
+ 
\bar\zeta^{\,h}_{t},
\label{eq:delay-perturb}
\end{align}
where $\phi_k=O\!\big(\sqrt{H^3/k}\big)$ and
$\bar\zeta^{\,h}_{t}=O\!\big(\sqrt{H^3/t}\big)$ hold
uniformly with high probability.
Both terms can be absorbed into a constant multiple of $\phi_k$,
since $\phi_{t_\uparrow}\le(1+O(1/H))\phi_k$ under the geometric
triggering rule.

\vspace{4pt}
\noindent\textbf{Step 4: Coefficient aggregation:}
When summing Eq.~(\ref{eq:delay-perturb}) over all time steps,
each successor-state error $\tilde\zeta^{\,h+1}_{t_\uparrow}$
is weighted by the accumulated step-size coefficients.
Using the triggering sequence
$t_n=\lceil(1+\varepsilon)^n\rceil$ with
$\varepsilon=\tfrac{1}{2H(H+1)}$
and initial index
$r_0=\big\lceil\tfrac{\log(10H^2)}{\log(1+\varepsilon)}\big\rceil$,
we obtain
\begin{align}
& \sum_{t}
\Big(
\mathbf{1}\{\text{no trigger at }t\}\,\alpha(\tau_{\mathrm{last}}(t))
+
\mathbf{1}\{\text{trigger at }t\}\,\alpha(t)
\Big) \nonumber \\
& \le 1+\tfrac{3}{H},
\end{align}
showing that the delay inflates the propagation factor
by at most $(1+3/H)$.

\vspace{4pt}
\noindent\textbf{Step 5: Regret recursion and final bound:}
Let $R^h=\sum_t\tilde{\delta}^{h}(t)$.
Combining the above results yields
\[
R^h \le (1+O(1/H))R^{h+1} + \tilde{O}(\sqrt{H^3 T}).
\]
Unrolling the recursion over $h=1,\dots,H$
gives $\sum_{h=1}^{H}R^h=\tilde{O}(\sqrt{H^3 T})$.
Accounting for the bounded per-step cost $\bar{C}$
and defense-cap budget $b$ scales the bound to
$\tilde{O}(\sqrt{H^3 \bar{C}^4 b T})$.
The number of belief-$\gQ$-value updates, and hence policy changes, is logarithmic in time. Thus, the cumulative switching cost contributes at most $\tilde{O}(\log T)$ to regret, absorbed by the main term.

\end{proof}

\end{document}

%% file: mathcommand.tex
\usepackage{amsmath,amsfonts,bm}

\def\eqref#1{equation~\ref{#1}}

\def\1{\bm{1}}

\DeclareMathAlphabet{\mathsfit}{\encodingdefault}{\sfdefault}{m}{sl}
\SetMathAlphabet{\mathsfit}{bold}{\encodingdefault}{\sfdefault}{bx}{n}

\def\gB{{\mathcal{B}}}

\def\gN{{\mathcal{N}}}

\def\gQ{{\mathcal{Q}}}

\def\gS{{\mathcal{S}}}

\def\gV{{\mathcal{V}}}

\newcommand{\E}{\mathbb{E}}

\newcommand{\Cov}{\mathrm{Cov}}

\DeclareMathOperator*{\argmin}{arg\,min}